\documentclass{sig-alternate-2013}

\newfont{\mycrnotice}{ptmr8t at 7pt}
\newfont{\myconfname}{ptmri8t at 7pt}
\usepackage{url}
\usepackage{etoolbox}
\newtoggle{PREPRINT}
\toggletrue{PREPRINT}

\iftoggle{PREPRINT}{
\permission{Disclaimer: This is the author's version of the work. It is posted
here for your personal use. Not for redistribution. The definitive version 
was published in the proceedings of the 2013 ACM SIGSAC conference on Computer \& Communications Security (CCS 2013)\\
\url{http://dl.acm.org/citation.cfm?id=2516705}
\copyrightetc{Copyright 2013 ACM \the\acmcopyr}
\crdata{978-1-4503-2477-9/13/11\ ...\$15.00.\\
http://dx.doi.org/10.1145/2508859.2516705}
}}
{
\permission{Permission to make digital or hard copies of all or part of this work for personal or classroom use is granted without fee provided that copies are not made or distributed for profit or commercial advantage and that copies bear this notice and the full citation on the first page. Copyrights for components of this work owned by others than ACM must be honored. Abstracting with credit is permitted. To copy otherwise, or republish, to post on servers or to redistribute to lists, requires prior specific permission and/or a fee. Request permissions from Permissions@acm.org.}
\conferenceinfo{CCS'13,}{November 4--8, 2013, Berlin, Gernany.}
\copyrightetc{Copyright 2013 ACM \the\acmcopyr}
\crdata{978-1-4503-2477-9/13/11\ ...\$15.00.\\
http://dx.doi.org/10.1145/2508859.2516705}
}

\clubpenalty=10000 
\widowpenalty = 10000

\usepackage{amsmath}
\usepackage{amssymb}
\usepackage{cite}
\usepackage{comment}
\usepackage[usenames]{color}
\usepackage{graphicx}
\usepackage{subfigure}

\usepackage{algorithm}
\usepackage{algorithmic}

\newcommand{\blue}{\color{blue}}

\DeclareMathOperator*{\argmax}{arg\,max}

\newtheorem{definition}{Definition}
\newdef{problem}{Problem}
\newdef{remark}{Remark}
\newtheorem{proposition}{Proposition}

\begin{document}

\title{Impact of Integrity Attacks on Real-Time Pricing in Smart Grids}

\numberofauthors{1}
\author{
\alignauthor Rui Tan$^1$ $\quad$ Varun Badrinath Krishna$^1$ $\quad$ David K. Y. Yau$^{1,2}$ $\quad$ Zbigniew Kalbarczyk$^{3}$\\
\affaddr{$^1$Advanced Digital Sciences Center, Illinois at Singapore}\\
\affaddr{$^2$Singapore University of Technology and Design, Singapore}\\
\affaddr{$^3$University of Illinois at Urbana-Champaign, Urbana, IL, USA}
\email{\{tanrui, varun.bk, david.yau\}@adsc.com.sg, kalbarcz@illinois.edu}
}

\maketitle

\begin{abstract}
  Modern information and communication technologies used by smart grids are subject to cybersecurity threats.
This paper studies the impact of integrity attacks on real-time pricing (RTP), a key feature of smart grids that uses such technologies to improve system efficiency. Recent studies have shown that RTP creates a closed loop formed by
the mutually dependent real-time price signals and price-taking demand. Such a closed loop can be exploited by an adversary whose objective is to destabilize the pricing system. Specifically, small malicious modifications to the price signals can be iteratively amplified by the closed loop, causing inefficiency and even 
severe failures such as blackouts. This paper adopts a control-theoretic approach to deriving the fundamental conditions of RTP stability under
two 
broad classes of integrity attacks, namely, the {\em scaling} and {\em delay} attacks.
We show that the RTP system is at risk of being destabilized only if the adversary can compromise the price signals advertised to smart meters by reducing their values in the scaling attack, or by
providing old prices to over half of all consumers in the delay attack.
The results provide useful guidelines for system operators to analyze the impact of various attack parameters on system stability, so that they may take adequate measures to secure RTP systems.
\end{abstract}

\category{B.8.2}{Performance and Reliability}{Performance Analysis and Design Aids}
\category{K.6.5}{Management of Computing and Information Systems}{Security and Protection}


\keywords{Smart grid; real-time pricing; stability; cyber security}

\section{Introduction}

  
A smart grid is an enhanced electrical grid that uses modern information and communication technologies 
to improve system reliability and efficiency. However, these computerized and networking technologies are subject to security threats that range from personal breaches~\cite{mclaughlin2010multi} to sophisticated cyber attacks launched by hostile organizations to cause widespread outages~\cite{grid-threat}.
As a sophisticated cyber-physical system, a smart grid features complex closed-loop feedback controls in various physical \cite{kundur1994power} and economic components \cite{761873}, which maintain desirable system performance in the presence of dynamics and uncertainties.
However, the impacts of cyber attacks against these closed loops on smart grids have received limited research attention. Without a systematic understanding of these impacts, system designers and operators will not be able to truly assess how these attacks may undermine the system's ability to provide mission-critical services, and hence take appropriate defensive measures against the possible threats.
This paper makes a step in
this direction
by quantifying, through both analysis and simulations, the impact of cyber attacks on real-time pricing (RTP) in smart grids, which involves closed-loop controls to stabilize the electricity market.
Dynamic pricing \cite{barbose2005real} is a widely adopted 
means to balance electricity generation and consumption.
The electricity price in the wholesale market is updated periodically (e.g., every hour) to match generation with dynamic demand.
In contrast, many current retail markets adopt static pricing schemes such as fixed and time-of-use tariffs, under which the consumers have limited incentives to adapt their electricity consumption to market conditions. This lack of incentives results in high peak demands that strain infrastructure capacities and unnecessarily increase operational costs.
By relaying the real-time wholesale prices to end users, RTP has been considered a key feature of smart grids, which can reduce over-provisioning and improve system efficiency.
Unfortunately, as analyzed in~\cite{roozbehani2011volatility}, there exists a fundamental {\em information asymmetry} between the system operators and consumers under RTP.
Specifically, a system operator needs to determine the price, which is supposed to clear the market, prior to the consumption decisions made by consumers.
As the system operator typically has limited knowledge about the consumers, its best practice is to determine the price based on historical demand.
As a result, RTP creates a closed loop formed by the
mutually dependent real-time prices and price-taking demand~\cite{roozbehani2011volatility}.
Such a closed loop can increase the system's sensitivity to dynamics and lower its robustness against situational uncertainties. As such, it can be 
exploited by an adversary whose objective is to destabilize the RTP system.
Specifically, small modifications to the signals in the closed loop made by the adversary can be iteratively amplified by the feedback,
causing inefficiency and significantly fluctuating demand, and possibly leading to severe failures such as blackouts.




In smart grids, the real-time price signals are exposed to different security threats at the source, during network transmissions, and at the consumer-level smart meters. In particular, recent
studies \cite{mclaughlin2010multi,rouf2012neighborhood} have shown that many smart meters lack basic security measures to ensure the integrity and authenticity of their input/output data. 
In light of these infrastructure vulnerabilities, imperative questions regarding RTP security include, ``Can the malicious compromise of real-time price signals destabilize the system and cause severe failures such as outages? If so, to what extent do the price signals need to be compromised?'' A main challenge in answering these questions stems from the complex coupling between the attacker actions and the closed-loop RTP system.
For instance, an attack against a few smart meters can cause monetary losses to individual victims, but it will not be able to destabilize the whole system. But if the adversary is able to compromise a sufficiently large number of consumers, the real-time price control mechanisms, which are designed to stabilize the system, may fail to mitigate the attack's impact. This impact may then pervade the whole system due to the iterative feedback. However, it is challenging to quantify these critical stability boundaries accurately, in order to characterize the impact of the attacks.

In this paper, we adopt a control-theoretic approach, which captures the closed-loop nature of the RTP, to deriving fundamental stability conditions under credible integrity attacks.
Based on the linearization of general abstract models of supply and demand, the RTP problem is formulated as a classical control problem for a linear time-invariant (LTI) system. We develop a basic pricing algorithm that sets the price adjustment proportional to the observed error between supply and demand. It ensures stability and captures the essence of stability-ensuring RTP systems. Therefore, the security analysis based on this algorithm provides a baseline understanding of the security of these systems.
We adopt a control-theoretic metric, namely, the {\em region of stability}, to characterize the security of the closed-loop RTP system with respect to important and practical adversary models. Specifically, we consider two common and broad classes of integrity attacks, which we call the {\em scaling} and {\em delay} attacks, where the prices advertised to smart meters are compromised by a scaling factor (so that the meters will use the wrong prices) and by corrupted timing information (so that the meters will use old prices), respectively.
In addition to directly tampering with traffic sent to the smart meters,
these attacks can be accomplished by indirect techniques that are less effort intensive. 
For instance, the delay attack can be realized by compromising the time synchronization of deployed smart meters. Note that current commercial smart meters~\cite{smart-meter-gps2} synchronize their clocks by either built-in GPS receivers or a network time protocol (NTP) supported by time servers~\cite{smart-meter-gps2}. Both approaches have been shown to be vulnerable to  realistic attack methods~\cite{vulnerable-ntp,nighswander2012gps}. As both attacks can be modeled as LTI transfer functions, the security analysis can be conducted under a LTI setting.

Based on our analytical framework, we derive the region of stability for both the scaling and delay attacks. In particular, we show that the RTP system will remain stable if (i) the compromised prices are amplified versions of the true prices under the scaling attack (i.e., the scaling factor exceeds one), or (ii) less than half of the consumers in the pricing system are compromised in the delay attack.
On the other hand, if the adversary can break either of these two conditions, the system may experience severely fluctuating demand arising from the system instability.
We report GridLAB-D~\cite{gridlabd} simulation results for a distribution system consisting of 1405 consumers to verify our analysis and demonstrate possible system emergencies (e.g., line and transformer overload events) caused by the integrity attacks.
Our results provide insights for securing RTP in smart grids. For instance, for the adversary to achieve the goal of compromising the price signals to at least half of the consumers, she may focus her efforts on shared support infrastructures such as the NTP time servers. This highlights the importance of securing these servers.

The rest of this paper is organized as follows. Section~\ref{sec:related} reviews related work. Section~\ref{sec:pre} presents the market model. Section~\ref{sec:rtp} defines the RTP stability problem, and develops a control-theoretic formulation of the problem. Section~\ref{sec:security} analyzes the impact of the scaling and delay attacks on the RTP system.
Section~\ref{sec:discuss} discusses extensions of the analysis framework to address a broader class of attacks that are combinations of multiple scaling and delay attacks. Section~\ref{sec:sim} presents simulation results.
Section~\ref{sec:conclude} concludes.

\section{Related Work}
\label{sec:related}


The security of smart grids is attracting increasing research attention. In particular, false data injection attacks against the state estimation of electrical grids have been extensively studied.
In \cite{liu2009injection}, Liu et al. systematically examine the conditions for bypassing a bad data detection mechanism of state estimation under various adversary capability models. Later studies \cite{yuan2011modeling,lin2012false,xie2011integrity,jia2012impacts,kosut2011malicious} show that the false data injection attacks can lead to increased system operation costs due to inordinate generation dispatch~\cite{yuan2011modeling} or energy routing~\cite{lin2012false}, as well as economic losses due to misconduct of electricity markets~\cite{xie2011integrity,jia2012impacts,kosut2011malicious}. In particular, the studies in \cite{xie2011integrity,jia2012impacts,kosut2011malicious} focus on false data injection attacks on real-time wholesale markets. They primarily emphasize attacks on critical measurements, which are often well protected by system operators.
Moreover, they ignore demand response of end users to prices.
In contrast, our work considers integrity attacks that target distributed smart meters that are much more vulnerable, and also accounts for demand response involving the end users.
All these related studies~\cite{liu2009injection,yuan2011modeling,lin2012false,xie2011integrity,jia2012impacts,kosut2011malicious} analyze attacks on systems using constrained optimization formulations such as power flow dispatch. The closed loop characterizing the RTP system in our work imposes specific challenges in the security analysis due to its iterative nature.

The security of a broader class of cyber-physical systems that feature complex closed loops has been studied recently.
In \cite{cardenas2008secure}, C{\'a}rdenas et al. identify challenges in the security analysis of these systems. In \cite{cardenas2011attacks}, the authors use simulations to study the impacts of integrity and denial-of-service attacks on a chemical reactor with multiple sensors and control loops.
In \cite{6303885}, the authors perform security threat assessment of supervisory control and data acquisition systems for water supply.
These studies focus on demonstrating the possibility of pushing the system to a certain state (e.g., unsafe pressure in a chemical reactor) by tampering with the sensor and/or control signals. They fall short of characterizing fully the fundamental critical stability conditions.

\section{Preliminaries}
\label{sec:pre}

\begin{table}
\caption{Summary of Notation*}
\label{table:symbols}
\centering
\scriptsize
\begin{tabular}{|c|l|c|}
\hline
{\bf Symbol} & {\bf Definition} & {\bf Unit} \\
\hline \hline
$T$ & pricing period & hour \\
\hline
$k$ & index of current pricing period & n/a \\
\hline
$\lambda_k$ & true price signal, $\lambda_k \in [\lambda_{\min}, \lambda_{\max}]$ & \$/MWh \\
\hline
$\lambda_k'$ & compromised price signal & \$/MWh \\
\hline
$\lambda_k^*$ & clearing price & \$/MWh \\
\hline
$b_k$ & total baseline demand & MW \\
\hline
$w(\lambda_k)$ & total price-responsive demand & MW \\
\hline
$d(k, \lambda_k)$ & total demand & MW \\
\hline
$D$ & a constant in the CEO model & MW \\
\hline
$\epsilon$ & price elasticity of demand & n/a \\
\hline
$s(\lambda_k)$ & scheduled total generation & MW \\
\hline
$\widehat{s}(\lambda_k)$ & realized total generation & MW \\
\hline
$p$ & slope of linear supply model & MW/(\$/MWh) \\
\hline
$q$ & intercept of linear supply model & MW \\
\hline
$e_k$ & generation scheduling error & MW \\
\hline
$\eta$ & price stabilization gain, $\eta \in (0,1)$ & n/a \\
\hline
$\lambda_o$ & price stabilization operating point & \$/MWh \\
\hline
$C$ & the set of all consumers & n/a \\
\hline
$C'$ & the set of consumers under attack & n/a \\
\hline
$\rho$ & $\simeq |C'| / |C|$ & n/a \\
\hline
$\gamma$ & amplification of scaling attack & n/a \\
\hline
$\tau$ & time delay of delay attack & $T$ \\
\hline
$h$ & marginal demand-supply ratio & n/a \\
\hline
\end{tabular}

* The unit of a quantity is omitted in the paper if it is specified here.
\end{table}

This section presents the market model adopted in this paper, which comprises an {\em independent system operator} (ISO) (Section~\ref{subsec:iso}), a set of consumers (Section~\ref{subsec:consumer-model}), and a set of suppliers (Section~\ref{subsec:supplier-model}).
For both consumers and suppliers, we first describe general abstract models, and then discuss concrete empirical models commonly used in literature. The analytical results in this paper (i.e., Propositions~\ref{PROP:ALGORITHM} to \ref{PROP:HARD-DELAY-ATTACK}) are based on the abstract models, while the empirical models (i.e., a constant elasticity of own-price demand model and a linear supply model) are used for the numerical examples and simulations.
The notation used in this paper is summarized in Table~\ref{table:symbols}.
We also use the following mathematical notation: $\dot{f}(x)$ denotes the first derivative of function $f(x)$; $f^{-1}(x)$ denotes the inverse of function $f(x)$; $\mathbb{R}^+$/$\mathbb{R}^-$ denotes the set of positive/negative real numbers; $\mathbb{Z}^+$ denotes the set of positive integers.

\subsection{ISO Model and RTP Schemes}
\label{subsec:iso}

The ISO is a profit-neutral agent, which aims to clear the market, i.e., match supply and demand. It determines a clearing price every $T$ hours and announces it to the suppliers and end consumers. Specifically, the price for the $k$th pricing period $[k \cdot T, (k+1) \cdot T]$, denoted by $\lambda_k$, is announced at time instant $(k-k_0) \cdot T$, where $k_0$ is a non-negative integer. Hence, this scheme corresponds to ex-ante pricing. We assume that the price must be within a range, i.e., $\lambda_k \in [\lambda_{\min}, \lambda_{\max}]$, where $\lambda_{\max} > \lambda_{\min} \in \mathbb{R}^+$. Note that in many electricity markets, suppliers sell electricity to utilities in wholesale markets, and utilities sell electricity to end consumers in retail markets. The market model adopted in this paper directly relays real-time wholesale prices to end consumers, which preserves the principles of RTP and simplifies the analysis. This model has been employed in previous studies (e.g., \cite{roozbehani2011volatility} and references therein) and is consistent with the essence of several experimental RTP programs~\cite{barbose2005real} provided by utilities, which include Board of Public Utilities in New Jersey, Baltimore Gas and Electric Company in Maryland, and Duquesne Light in Pennsylvania. In these programs, the hourly wholesale prices published by PJM Interconnection LLC\footnote{PJM is a Regional Transmission Organization (RTO). This paper does not distinguish between ISO and RTO.}
are used directly as retail prices, where $T=1$ and $k_0=0$. A few other experimental RTP programs give customers advance notice of hourly prices. For instance, $k_0=1$ for the RTP-HA-2 program of Georgia Power~\cite{rtp-ha-2}. To simplify the discussion, we focus on RTP schemes without advance notice, i.e., $k_0=0$. However, our analysis can be easily extended to encompass advance notice. In reality, locational prices can be applied to address location-dependent transmission costs. In many areas, as generation cost dominates transmission cost, variations of locational prices are often small. For instance, the relative standard deviation of the locational prices for 219 locations published by PJM is often around 5\% only~\cite{pjm-price}. As this paper focuses on the impact of integrity attacks on RTP systems, we ignore the small variations in the locational prices. Thus, we assume that all the suppliers and consumers are subject to the same real-time price $\lambda_k$.

\subsection{Consumers}
\label{subsec:consumer-model}



\noindent {\bf Abstract demand model:}
Let $C$ denote the set of consumers in the system. For consumer $j \in C$, let $b_{k,j}$ denote the {\em baseline demand} in the $k$th pricing period, which is exogenous, bounded, dependent on time, but independent of $\lambda_k$. For instance, for a household, the baseline demand can characterize the minimum necessary power usage, such as cooking and a minimum level of illumination.
Let $v_j(x - b_{k,j})$ denote the additional value (unit: \$/hour) derived from consuming a total of $x$ units of power in the $k$th pricing period, where $x \ge b_{k,j}$. We assume that $v_j(\cdot)$ is a strictly increasing and strictly concave function. Let $d_j(k, \lambda_k)$ denote the demand of consumer $j$ in the $k$th pricing period given price $\lambda_k$. We denote $w_j(\lambda_k) = \dot{v}_j^{-1}(\lambda_k)$, where $\dot{v}_j^{-1}(\cdot)$ represents the inverse function of $\dot{v}_j(\cdot)$.
The demand is given by
\begin{equation*}
d_j(k, \lambda_k) = \argmax\nolimits_{x} \left( v_j(x \! - \! b_{k,j}) \! - \! \lambda_k \cdot x \right) = b_{k,j} \! + \! w_j(\lambda_k).
\end{equation*}
In the above equation, $v_j(x - b_{k,j}) - \lambda_k \cdot x$ is the additional utility beyond the baseline by consuming totally $x$ units of power. The $w_j(\lambda_k)$ is referred to as {\em price-responsive demand}. It is easy to verify that $w_j(\lambda_k)$ is a decreasing function of $\lambda_k$.
By denoting $b_k = \sum_{j \in C} b_{k,j}$ and $w(\lambda_k) = \sum_{j \in C} w_j(\lambda_k)$, the total demand, denoted by $d(k, \lambda_k)$, is given by
\begin{equation}
d(k, \lambda_k) = \sum\nolimits_{j \in C} d_j(k, \lambda_k) = b_k + w(\lambda_k).
\label{eq:demand3}
\end{equation}
As there are a large number of consumers, we assume that $b_{k,j}$ and $v_j(\cdot)$ are unknown to the ISO. However, the ISO knows the historical total demand $\{ d(h, \lambda_h) | h \in [0, k-1] \}$.
The above derivations, which are based on the basic concept of {\em utility} in economics, explain the consumer's demand response to price. Human-induced demand response has been observed in previous studies~\cite{sweeny2002california}. With the increasing adoption of smart appliances and home automation systems, this demand response will become more automated.

\vspace{0.5em}
\noindent {\bf Empirical demand model:}
The constant elasticity of own-price (CEO) model~\cite{fleten2005constructing}
is a simple model that can be used
to characterize the total price-responsive demand, which is defined by
  $w(\lambda_k) = D \cdot \lambda_k^\epsilon$,
where $D$ and $\epsilon$ are positive and negative constants, respectively. The $\epsilon$ is referred to as the {\em price elasticity of demand}, which is typically within $(-1,0)$~\cite{filippini2011short,lijesen2007real}.

\subsection{Suppliers}
\label{subsec:supplier-model}


\noindent {\bf Abstract supply model:}
Each supplier aims to maximize its profit. Let $S$ denote the set of suppliers in the system. For any supplier $i \in S$, let the function $c_i(x)$ represent the cost (unit: \$/hour \cite[p.~534]{grainger1994power}) of producing and transmitting $x$ units of power. We assume that $c_i(x)$ is a strictly convex and non-negative function over the support $x \ge 0$. Moreover, we assume that $c_i(x)$ is an asymptotically increasing function, i.e., $\exists x_0 \ge 0$, $\dot{c}_i(x) > 0$ if $x > x_0$. Let $s_i(\lambda_k)$ denote the quantity of power that supplier $i$ schedules to generate in the $k$th pricing period given price $\lambda_k$, which is given by $s_i(\lambda_k) = \argmax_{x} \left( \lambda_k \cdot x - c_i(x) \right) = \dot{c}_i^{-1}(\lambda_k)$. Note that $\lambda_k \cdot x - c_i(x)$ is the profit from generating $x$ units of power.
It is easy to verify that $s_i(\lambda_k)$ is an increasing function of $\lambda_k$. We assume that the generation capacity of the supplier $i$ is at least $s_i(\lambda_{\max})$. In this paper, we consider centralized bulk generation rather than distributed generation.
Therefore, an ISO can estimate $\{c_i(\cdot) | \forall i \in S \}$ as there are typically a limited number of suppliers.
Let $s(\lambda_k)$ denote the {\em scheduled} total supply in the $k$th pricing period, i.e., $s(\lambda_k) = \sum_{i \in S} s_i(\lambda_k)$. We note that, in current electricity wholesale markets, the supply and price are often determined through a bidding process~\cite{fleten2005constructing}, which is generally governed by the costs of generation and transmission. In a competitive bidding-based wholesale market, the resultant supply and price will well reflect the supply model derived from the cost model.
We assume that the {\em realized} total generation in the $k$th pricing period, denoted by $\widehat{s}(\lambda_k)$, is always equal to the total demand $d(k, \lambda_k)$. This is consistent with the current technologies in power grids. For instance, when the demand exceeds the scheduled generation, the system operators will observe a dropping voltage and frequency, and generation can be increased to meet demand and maintain the voltage and frequency at their nominal values.

\begin{figure}
  \centering
  \includegraphics{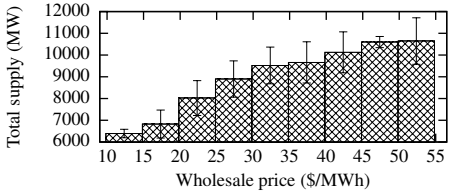}
  \caption{Total supply vs. wholesale price \cite{AEMO}.}
  \label{fig:supply-price}
\end{figure}

\vspace{0.5em}
\noindent {\bf Empirical supply model:}
Quadratic cost functions have been widely adopted in the analyses of power generation systems~\cite{grainger1994power}, i.e., $c_i(x) = \alpha_i x^2 + \beta_i x + \delta_i$, where $\alpha_i > 0$. To make $c_i(x)$ non-negative over $x \ge 0$, we have a few additional conditions: $\delta_i \ge 0$ if $\beta_i \ge 0$, or $\delta_i \ge \frac{\beta_i^2}{4\alpha_i}$ if $\beta_i < 0$. Therefore, if $\lambda_k \ge \beta_i$, $s_i(\lambda_k) = \frac{1}{2\alpha_i} (\lambda_k - \beta_i)$; otherwise, $s_i(\lambda_k) = 0$.
To simplify the evaluation based on this empirical supply model, we assume that $\beta_i < 0$ and $\delta_i \ge \frac{\beta_i^2}{4\alpha_i}$, such that the total supply can be simplified as
$s(\lambda_k) =  p \cdot \lambda_k + q$,
where $p = \sum_{i \in S}\frac{1}{2\alpha_i} > 0$ and $q = \sum_{i \in S} -\frac{\beta_i}{2\alpha_i} > 0$.
We now empirically verify this linear supply model using the half-hourly total supply data of New South Wales (NSW), Australia, provided by the Australian Energy Market Operator (AEMO)~\cite{AEMO}. Fig.~\ref{fig:supply-price} shows the histogram of total supply versus the wholesale price in January, 2012. We can see that the relationship between the average supply and price is nearly linear. A linear fitting of the total supply shown in Fig.~\ref{fig:supply-price} yields $p=152$ and $q=4503$. Such a linear relationship can also be seen in the investigation of the electricity market of California~\cite[p.~112]{sweeny2002california}, where the demand does not exceed the generation capacity.

\section{The RTP Problem and Solutions}
\label{sec:rtp}


This section formally states the RTP problem, examines an existing solution, and proposes a new basic control-theoretic solution with provable {\em bounded-input bounded-output stability} (referred to as {\em stability} for short in this paper). Based on our solution, the security analysis in Section~\ref{sec:security} lays the foundation for understanding the impact of attacks on feedback-based RTP systems.


\subsection{The RTP Problem and Solution Stability}
\label{subsec:RTP-problem}

\begin{figure}
  \centering
  \includegraphics{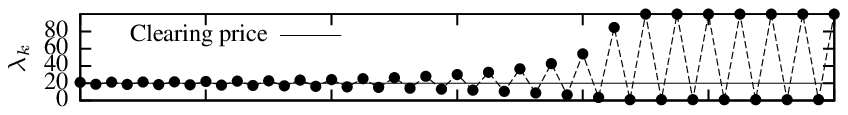}
  \includegraphics{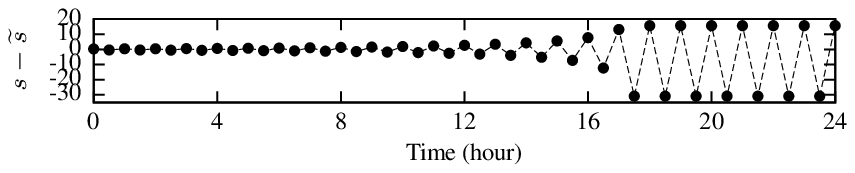}
  \vspace{-0.2in}
  \caption{An example of unstable solution \cite{roozbehani2011volatility} under the linear supply and CEO demand models. Top figure: Evolution of price. Bottom figure: Generation scheduling error in GW. ($p$=152, $q$=4503, $\epsilon=-0.6$, $b_k$=2000, $\lambda_k^*$=20, $\lambda_0$=21, $\lambda_{\min}$=1, $\lambda_{\max}$=100)}
  \label{fig:unstable-example}
\end{figure}

At time instant $k\cdot T$, the ISO aims to find the {\em clearing price} for the period $[k \cdot T, (k+1) \cdot T]$, denoted by $\lambda_k^*$, such that the scheduled supply matches demand, i.e., $s(\lambda_k^*) = d(k, \lambda_k^*)$.
However, as $d(k, \lambda_k)$ is unknown, in practice, the ISO sets the price to match the scheduled supply and {\em predicted} demand (denoted by $\widetilde{d}(k, \lambda_k)$). Formally, we define

\vspace{.5em}
{\em RTP Problem:} Find $\lambda_k$ such that $s(\lambda_k) = \widetilde{d}(k, \lambda_k)$.
\vspace{.5em}

Straightforward solutions to this problem may lead to significantly fluctuating prices. For instance, a direct feedback approach~\cite{roozbehani2011volatility}, which uses $d(k-1, \lambda_{k-1})$ as the predicted demand $\widetilde{d}(k, \lambda_k)$, can yield oscillating prices as shown in Fig.~\ref{fig:unstable-example}. The root cause of the oscillation is the unstable closed-loop system formed by the direct feedback.
When the system is unstable, the price set by the ISO will oscillate or diverge, even if the initial price is very close to the true clearing price. The oscillations may lead to severe consequences.
When the diverging prices reach low values, the increased demand may cause overload of the transmission and distribution networks. Moreover, as shown in Fig.~\ref{fig:unstable-example}, the unstable system may experience significant {\em generation scheduling errors} (i.e., $s(\lambda_k) - \widehat{s}(\lambda_k)$). Although reserve generating capacity can help compensate for the errors, their use may increase the cost of operating the system.




To study the impact of integrity attacks on the RTP systems, we should start with RTP schemes that are stable in the absence of attacks. In Section~\ref{subsec:direct-feedback}, we examine the stability of an existing RTP scheme \cite{roozbehani2011volatility} in the absence of attacks. Its poor stability 
properties motivate us to design a basic control-theoretic RTP scheme with provable stability in the absence of attacks. This is the subject of Section~\ref{sec:classical-control}.

\subsection{Direct Feedback Approach}
\label{subsec:direct-feedback}

A direct feedback approach to the RTP problem has been studied in \cite{roozbehani2011volatility}. The conditions for global stability\footnote{The system is {\em globally stable} if the price converges to the clearing price given any positive initial price.} of the approach, i.e., the properties of $s(\cdot)$ and $d(\cdot, \cdot)$ that ensure global stability, have also been analyzed in \cite{roozbehani2011volatility}.
The approach is briefly reviewed as follows. It predicts $d(k, \lambda_k)$ by the most recent demands based on an autoregression model, and determines the price $\lambda_k$ accordingly. For instance, the simplest autoregression model uses $d(k-1, \lambda_{k-1})$ as $\widetilde{d}(k, \lambda_k)$ and the closed-loop system is expressed as $s(\lambda_k) = d(k-1, \lambda_{k-1})$. It is also referred to as the {\em persistence model}. Thus, the price is determined as $\lambda_k = s^{-1}(d(k-1, \lambda_{k-1}))$. If direct feedback based on the persistence model is not globally stable, it is difficult to stabilize those systems globally with an autoregression-based direct feedback approach \cite{roozbehani2011volatility}. Hence, global stability under the persistence model is particularly important. By applying Corollary~3 in \cite{roozbehani2011volatility}, our analysis \cite{thistechreport} shows that
under the linear supply and CEO demand models, where $\epsilon \in (-1,0)$ and $b_k$ is a non-negative constant $b$,
the system $\lambda_k = s^{-1}(d(k-1, \lambda_{k-1}))$ is not globally stable.

\begin{figure}
  \subfigure[Map of convergence probability ($b=0$).]{
    \includegraphics{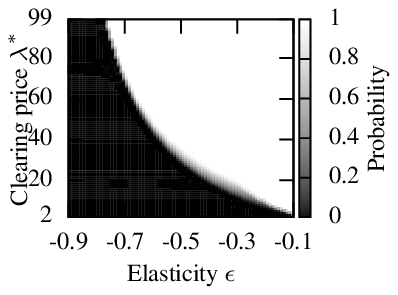}
    \label{fig:stability-prob}
  }
  \hspace{.2em}
  \subfigure[Stability boundary under various settings of $b$.]{
    \includegraphics{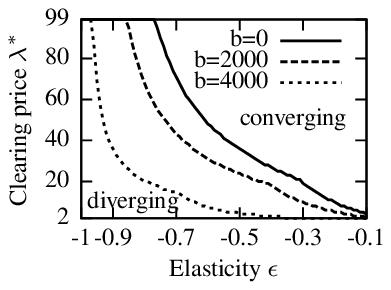}
    \label{fig:stability-boundary-df}
  }
  \vspace{-0.15in}
  \caption{Stability of direct feedback approach under the linear supply and CEO demand models.}
  \label{fig:empirical-stability-study}
  \vspace{-0.1in}
\end{figure}

As the direct feedback approach is not globally stable, its convergence highly depends on the system state. If $b_k$ is time-varying, it can push the system to a state that eventually leads to divergence. A few realistic constraints may affect the system stability. For instance, even if the system is not globally stable, the system may converge when the initial price is within the allowed range $[\lambda_{\min}, \lambda_{\max}]$. Moreover, if a tentative price is out of the range $[\lambda_{\min}, \lambda_{\max}]$, it will be rounded to $\lambda_{\min}$ or $\lambda_{\max}$. Hence, we conduct numerical experiments that account for these realistic constraints for better understanding.
The settings of the supply model are $p=152$ and $q=4503$, which are obtained in Fig.~\ref{fig:supply-price}.
Given a clearing price $\lambda^* \in [\lambda_{\min}, \lambda_{\max}]$, the coefficient $D$ is set by solving $s(\lambda^*) = d(\lambda^*)$, i.e., $D = \frac{p \lambda^* + q - b}{\lambda^{*\epsilon}}$, where $b \in [0, p \lambda_{\min} + q)$ to ensure $D > 0$ for any valid $\lambda^*$.
Fig.~\ref{fig:stability-prob} shows a map of the probability that the system is converging when $b=0$. To calculate the probability, the initial price sweeps the range $[\lambda_{\min}, \lambda_{\max}]$ and the probability is calculated as the fraction of the initial prices that lead to system convergence. Fig.~\ref{fig:stability-prob} shows that the probability is mostly either 0 or 1 and the transition region with the probability within $(0,1)$ is sharp. Fig.~\ref{fig:stability-boundary-df} plots the boundaries between the converging and diverging regions under various settings of $b$, where its valid range is $[0, 4503)$.
For instance, when $b=4000$, the system can be diverging if $\epsilon=-0.8$ and $\lambda^*<20$. For the data shown in Fig.~\ref{fig:supply-price}, about 20\% of the prices are lower than $20$. Therefore, the direct feedback approach can be unstable with significant probabilities.


\subsection{Control-Theoretic Price Stabilization}
\label{sec:classical-control}

The results in Section~\ref{subsec:direct-feedback} show the necessity of control laws for stabilizing the RTP systems.
This section develops a basic control-theoretic price stabilization algorithm with provable stability.
The main objective of this paper is to identify the fundamental impacts of integrity attacks against the vulnerable real-time price signals on the stability of the RTP systems with well designed control laws.
More sophisticated price stabilization algorithms could be developed.
However, our security analysis in Section~\ref{sec:security}, which is based on our basic control-theoretic algorithm, provides fundamental baselines for understanding the security properties of RTP systems running such sophisticated algorithms.

The objective of price stabilization is to minimize the generation scheduling error and adapt to the time-varying baseline load.
We reformulate the RTP problem as a classical discrete-time feedback control problem. Under this formulation, the ISO observes the generation scheduling error in the previous pricing period, and then uses it to guide the setting of the price in the next pricing period. Specifically, let $e_k$ denote the generation scheduling error, i.e., $e_k = s(\lambda_k) - d(k, \lambda_k)$. The objective is to maintain the {\em controlled variable} $e_k$ close to its {\em reference}, which is zero. The {\em manipulated variable} is $\lambda_k$, and $s(\lambda_k) - d(k, \lambda_k)$ is the {\em controlled system}. The block diagram of the feedback control loop is shown in Fig.~\ref{fig:classical-control}. We let $G_c(z)$, $G_p(z)$, and $H(z)$ denote the transfer functions of the price stabilization algorithm, the controlled system, and the observation system, which are expressed in the $z$-transform domain. The $z$-transform~\cite{ogata1995discrete} provides a compact representation for discrete-time functions, where $z$ represents a time shift operation. As $b_k$ is bounded and independent of $\lambda_k$, it can be modelled as a {\em disturbance} to the system \cite{ogata1995discrete}.

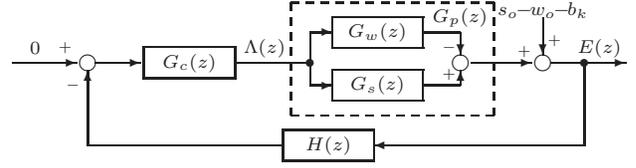
\begin{figure}
  \scriptsize
  \setlength{\unitlength}{1cm}
  \centering
  \begin{picture}(10,2.3)(0,1.5)
    \put(0.3,3.2){\makebox(0,0){0}}
    \put(0,3){\vector(1,0){0.85}}
    \put(1,3){\circle{0.2}}    
    \put(0.7,3.2){\makebox(0,0){$ \scriptstyle + $}}
    \put(0.8,2.7){\makebox(0,0){$ \scriptstyle - $}}
    \put(1.1,3){\vector(1,0){0.6}}
    \put(1.75,2.8){\framebox(1.2,0.4){$G_c(z)$}}
    \put(2.95,3){\line(1,0){1}}
    \put(3.35,3.2){\makebox(0,0){$\Lambda(z)$}}
    \put(3.95,3){\circle*{0.1}}
    \put(3.95,3){\line(0,1){0.4}}
    \put(3.95,3){\line(0,-1){0.3}}
    \put(3.95,3.4){\vector(1,0){0.3}}
    \put(3.95,2.7){\vector(1,0){0.3}}
    \put(4.25,3.2){\framebox(1.2,0.4){$G_w(z)$}}
    \put(4.25,2.5){\framebox(1.2,0.4){$G_s(z)$}}
    \put(5.45,3.4){\line(1,0){0.5}}
    \put(5.45,2.7){\line(1,0){0.5}}
    \put(5.95,3){\circle{0.2}}
    \put(5.8,2.85){\makebox(0,0){$ \scriptstyle +$}}
    \put(5.8,3.2){\makebox(0,0){$ \scriptstyle -$}}
    \put(5.95,3.4){\vector(0,-1){0.3}}
    \put(5.95,2.7){\vector(0,1){0.2}}
    \put(6.1,3){\vector(1,0){0.8}}
    \put(7.05,3.6){\vector(0,-1){0.5}}
    \put(7.05,3.7){\makebox(0,0){$s_o \!\! - \!\! w_o \!\! - \!\! b_k$}}
    \put(7.05,3){\circle{0.2}}
    \put(6.8,3.15){\makebox(0,0){$ \scriptstyle +$}}
    \put(7.2,3.3){\makebox(0,0){$ \scriptstyle +$}}
    \put(7.15,3){\vector(1,0){1}}
    \put(7.8,3.2){\makebox(0,0){$E(z)$}}
    \multiput(3.7,3.8)(0.2,0){14}{\line(1,0){0.1}}
    \multiput(3.7,2.3)(0.2,0){14}{\line(1,0){0.1}}
    \multiput(3.7,2.3)(0,0.2){8}{\line(0,1){0.1}}
    \multiput(6.4,2.3)(0,0.2){8}{\line(0,1){0.1}}
    \put(5.95,3.6){\makebox(0,0){$G_p(z)$}}
    \put(7.6,3){\circle*{0.1}}
    \put(7.6,3){\line(0,-1){1.1}}
    \put(7.6,1.9){\vector(-1,0){2.8}}
    \put(3.6,1.7){\framebox(1.2,0.4){$H(z)$}}
    \put(3.6,1.9){\line(-1,0){2.6}}
    \put(1,1.9){\vector(0,1){1}}
  \end{picture}
  \vspace{-0.2in}
  \caption{The control-theoretic price stabilization. $\Lambda(z)$ and $E(z)$ are the $z$-transforms of $\lambda_k$ and $e_k$.}
  \label{fig:classical-control}
\end{figure}

We now derive the expressions of $G_p(z)$ and $H(z)$. To preserve generality, our design is based on the abstract supply and demand models $s(\lambda)$ and $w(\lambda)$, which can be non-linear as in the CEO model. In controller design, a common approach to dealing with non-linear systems is to adopt local linearization~\cite{ogata1995discrete}.
Specifically, $s(\lambda) \simeq s(\lambda_o) + \dot{s}(\lambda_o) \cdot (\lambda - \lambda_o)$ and $w(\lambda) \simeq w(\lambda_o) + \dot{w}(\lambda_o) \cdot (\lambda - \lambda_o)$, where $\lambda_o$ is a {\em fixed operating point}. By denoting $s_p(\lambda) = \dot{s}(\lambda_o) \lambda$, $s_o = s(\lambda_o) - \dot{s}(\lambda_o) \lambda_o$, $w_p(\lambda) = \dot{w}(\lambda_o) \lambda$, and $w_o = w(\lambda_o) - \dot{w}(\lambda_o) \lambda_o$, we have $s(\lambda) \simeq s_p(\lambda) + s_o$ and $w(\lambda) \simeq w_p(\lambda) + w_o$. As $s_o$ and $w_o$ are independent of $\lambda$, as shown in Fig.~\ref{fig:classical-control}, we can collect them with the price-independent $b_k$. The transfer functions of the proportional models $s_p(\lambda)$ and $w_p(\lambda)$ are $G_s(z) = \dot{s}(\lambda_o)$ and $G_w(z) = \dot{w}(\lambda_o)$, respectively. Therefore, $G_p(z) = G_s(z) - G_w(z) = \dot{s}(\lambda_o) - \dot{w}(\lambda_o)$.
As the price stabilization algorithm uses the observed generation scheduling error in the previous pricing period to adjust the price for the current pricing period, $H(z)=z^{-1}$, which represents the delay of one pricing period.
Based on the above modeling, we have the following proposition, which can be proved by examining whether the poles of the system are located within a unit circle centered at the origin of the $z$-plane. The details of the proof are omitted due to space constraints and can be found in \cite{thistechreport}.

\begin{proposition}
For the linearized system $G_p(z)=\dot{s}(\lambda_o)-\dot{w}(\lambda_o)$ with $\lambda_o$ fixed and the observation system $H(z)=z^{-1}$, the following price stabilization algorithm ensures stability: $\lambda_k = \lambda_{k-1} - \frac{2\eta}{\dot{s}(\lambda_o) - \dot{w}(\lambda_o)} \cdot e_{k-1}$, where $\eta \in (0,1)$.
\label{PROP:ALGORITHM}
\end{proposition}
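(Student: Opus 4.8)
The plan is to treat the feedback loop of Fig.~\ref{fig:classical-control} as a single-input single-output LTI system driven by the disturbance, derive its closed-loop transfer function in the $z$-domain, and show that its sole pole lies strictly inside the unit circle exactly when $\eta\in(0,1)$. Recall that for an LTI system BIBO stability is equivalent to all poles of the (reduced) closed-loop transfer function having modulus less than one; this is the criterion the proposition alludes to. Two facts from the linearization simplify matters: $G_p(z)=\dot{s}(\lambda_o)-\dot{w}(\lambda_o)$ is a pure constant gain (call it $G_p$), and because $s(\cdot)$ is increasing while $w(\cdot)$ is decreasing in $\lambda$ we have $G_p>0$, so the gain $\tfrac{2\eta}{G_p}$ appearing in the algorithm is well defined; moreover $H(z)=z^{-1}$ contributes a single unit delay.

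First I would read the difference equation of the algorithm as the controller $G_c(z)$ in the forward path. Writing it as $\lambda_k-\lambda_{k-1}=-\tfrac{2\eta}{G_p}\,e_{k-1}$ and noting that the summing junction (reference $0$, negative feedback) feeds $G_c$ with $-H(z)E(z)=-z^{-1}E(z)$, while $e_{k-1}$ is exactly this delayed error, the $z$-transform gives $\Lambda(z)(1-z^{-1})=\tfrac{2\eta}{G_p}\cdot\big(-z^{-1}E(z)\big)$, i.e.\ an integral controller $G_c(z)=\tfrac{2\eta}{G_p}\cdot\tfrac{z}{z-1}$. The one delicate point here is bookkeeping the two sources of delay so as not to double-count: the $z^{-1}$ visible in $e_{k-1}$ is supplied by $H(z)$, whereas the pole at $z=1$ is the integrator inherent in updating $\lambda_k$ from $\lambda_{k-1}$.

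Next I would assemble the loop transfer function $L(z)=G_c(z)\,G_p(z)\,H(z)$. Substituting the three blocks, the gain $G_p$ cancels and the integrator's $z$ cancels the delay $z^{-1}$, leaving the clean expression $L(z)=\tfrac{2\eta}{z-1}$. The closed-loop characteristic equation $1+L(z)=0$ then reduces to $z-1+2\eta=0$, so the system has the single pole $z=1-2\eta$. As a cross-check one can bypass the $z$-transform entirely: substituting $e_{k-1}=G_p\lambda_{k-1}+c_{k-1}$ with disturbance $c_k=s_o-w_o-b_k$ into the algorithm collapses it to the first-order recursion $\lambda_k=(1-2\eta)\lambda_{k-1}-\tfrac{2\eta}{G_p}c_{k-1}$, which exhibits the same characteristic root $1-2\eta$ and makes the bounded-input bounded-output behaviour transparent.

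Finally, the stability condition is just $|1-2\eta|<1$, which is equivalent to $0<2\eta<2$, i.e.\ $\eta\in(0,1)$, completing the argument. The calculation itself is routine; the main obstacle, such as it is, lies in setting up $G_c(z)$ correctly---keeping the negative-feedback sign consistent with the minus sign in the algorithm and not conflating the loop delay $H(z)=z^{-1}$ with the integrator pole. Once the loop gain is correctly reduced to $\tfrac{2\eta}{z-1}$, locating the pole and imposing $|1-2\eta|<1$ is immediate.
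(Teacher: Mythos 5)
Your proof is correct and follows essentially the same route as the paper's own argument: deriving the integral controller $G_c(z)=\frac{2\eta}{G_p}\cdot\frac{z}{z-1}$ from the update rule, reducing the closed-loop characteristic equation to $z-1+2\eta=0$, and concluding stability from $|1-2\eta|<1 \iff \eta\in(0,1)$. The time-domain recursion $\lambda_k=(1-2\eta)\lambda_{k-1}-\frac{2\eta}{G_p}c_{k-1}$ is a nice additional sanity check, but the core pole-placement argument is identical to the paper's.
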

The transfer function of the above algorithm is $G_c(z)=\frac{2\eta}{(\dot{s}(\lambda_o) - \dot{w}(\lambda_o))(1 - z^{-1})}$. From control theory, when $b_k$ is a constant, the system converges the fastest when $\eta=0.5$, as the system's pole is at the origin \cite{ogata1995discrete}. The convergence speed is particularly important for adapting to fast time-varying baseline load so that the convergence is achieved before a significant change of baseline load. However, our analysis in Section~\ref{sec:security} shows that we generally need to set a smaller $\eta$ to reduce the impact of attacks. In other words, we have to sacrifice convergence speed for resilience to attacks.

As discussed in Section~\ref{subsec:consumer-model}, $w(\cdot)$ is unknown to the ISO. In practice, the ISO can estimate $\dot{w}(\lambda_o)$ based on the history of price-demand pairs. Our analysis in Section~\ref{subsec:inaccurate-w} shows that, if the relative error in estimating $\dot{w}(\lambda_o)$ is less than $100\times(1-\eta) \%$, the algorithm given by Proposition~\ref{PROP:ALGORITHM} remains stable. For instance, if $\eta=0.5$, the relative error bound is 50\%, which is a tractable requirement for most estimation algorithms. Moreover, for a smaller $\eta$ that is set to increase resilience to attacks, the error bound will be larger. As the focus of this paper is to analyze the fundamental impact of integrity attacks on system stability under the control law in Proposition~\ref{PROP:ALGORITHM}, we do not elaborate on the estimation algorithm, and the analysis in Section~\ref{sec:security} assumes that the ISO can accurately estimate $\dot{w}(\lambda_o)$. Section~\ref{subsec:inaccurate-w} also discusses the impact of inaccurate $\dot{w}(\lambda_o)$ on the security analysis.

The price stabilization algorithm in Proposition~\ref{PROP:ALGORITHM} assumes a fixed operating point $\lambda_o$. However, intuitively, if the operating point $\lambda_o$ adapts to the current price, the linear approximations to $s(\lambda)$ and $w(\lambda)$ can be more accurate. Specifically,
by setting $\lambda_o = \lambda_{k-1}$, we have the following algorithm:
\begin{equation}
\lambda_k = \lambda_{k-1} - \frac{2 \eta}{\dot{s}(\lambda_{k-1}) - \dot{w}(\lambda_{k-1})} \cdot e_{k-1}.
\label{eq:classical-control}
\end{equation}
Although there is a lack of rigorous theory to support the technique of adapting $\lambda_o$ to the current price, our numerical experiments show that the algorithm in Eq.~(\ref{eq:classical-control}) is always stable under all the settings shown in Fig.~\ref{fig:empirical-stability-study}. The numerical examples and simulations conducted in the rest of this paper employ the algorithm in Eq.~(\ref{eq:classical-control}). Fig.~\ref{fig:running-example}(a) shows the evolution of price with fixed baseline load. When $\eta=0.5$, $\lambda_k$ converges to $\lambda^*$ after two pricing periods. When $\eta=0.2$, the system has a longer settling time. When $\eta=0.8$, the price oscillates but converges. The oscillation is caused by a negative pole \cite{ogata1995discrete}.
Fig.~\ref{fig:running-example} will also be used as a running example in Section~\ref{sec:security} to illustrate the impact of attacks.

\begin{figure}
\includegraphics{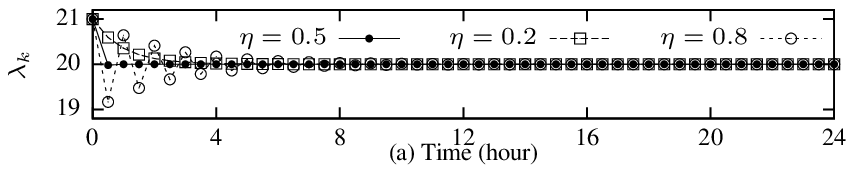}
\includegraphics{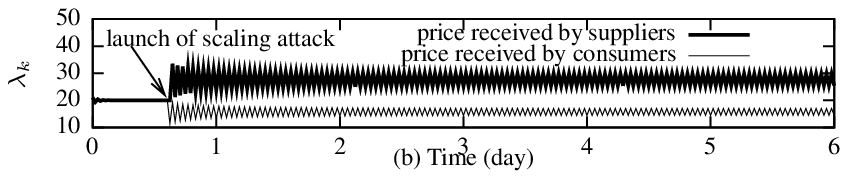}
\includegraphics{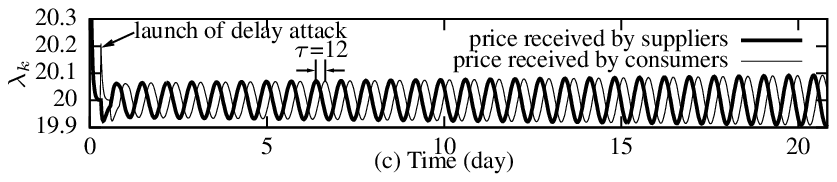}
\caption{A running example under the linear supply and CEO demand models ($T=0.5$, $\epsilon=-0.8$, $b_k=2000$, $\lambda^*=20$, $\lambda_0=21$). (a) Price stabilization; (b) Scaling attack ($\eta=0.8$, $\rho=1$, $\gamma=0.57$); (c) Delay attack ($\eta=0.2$, $\rho=1$, $\tau=12$).}
\label{fig:running-example}
\end{figure}

\section{Integrity Attacks to RTP}
\label{sec:security}

This section studies the impact of two integrity attacks on RTP systems under the RTP scheme given by Proposition~\ref{PROP:ALGORITHM}.

\subsection{Attack Models and Impact Metrics}

We consider integrity attacks on the price signals received by a subset of consumers. If the price signal received by a consumer is subject to attack, the price signal applied for the current pricing period (denoted by $\lambda_k'$) is different from the true price $\lambda_k$.
The integrity attacks on the price signals can be launched in different ways. For instance, once the adversary has compromised the intermediate nodes in the communication network of the smart grid (e.g., routers) and obtained the decryption/encryption keys held by the ISO and/or smart meters, the adversary can intercept and forge price data packets. Moreover, recent reverse engineering and penetration tests~\cite{mclaughlin2010multi,rouf2012neighborhood} have shown that many smart meters lack basic security measures to ensure integrity and authenticity of the input/output data. These security vulnerabilities can be exploited to maliciously change the price signals. We would like to point out that the integrity attacks do pose strong requirements for the adversary. They require that the adversary is able to modify the price information, either at the source, during transmissions, or at the smart meters. 
However, these attacks in a cyber environment are certainly feasible and credible, and it would be wrongfully complacent to ignore their possibility.  
In this paper, as the price signals sent to the centralized suppliers are often well protected, we assume that they are not subject to attacks. However, our analysis framework can be easily extended to account for possible attacks on the suppliers.

\subsubsection{Attack Models}
\label{subsec:attack-models}


As the number of consumers in a smart grid is often large, the number of compromised consumers is an important metric for the adversary's capability and resource availability. Let $C'$ denote the set of consumers whose price signals are compromised, where $C' \subseteq C$, and $w'(\lambda_k)$ denote the total price-responsive demand in the presence of an attack. Thus, $w'(\lambda_k) = \sum_{j \in C'} w_j(\lambda_k') + \sum_{j \in C \setminus C'} w_j(\lambda_k)$. We define
\begin{equation}
\rho = \frac{\sum_{j \in C'} w_j(\lambda_k')}{\sum_{j \in C} w_j(\lambda_k')} = \frac{\sum_{j \in C'} w_j(\lambda_k')}{w(\lambda_k')},
\label{eq:rho}
\end{equation}
which characterizes the fraction of consumers receiving the compromised price signals. If the consumers are {\em homogeneous} (i.e., $w_j(\cdot)$ is same for all $j$), $\rho$ is a constant, i.e., $\rho=|C'|/|C|$. If they are heterogeneous, $\rho$ is a function of $\lambda_k'$. The extensive numerical evaluation in \cite{thistechreport} shows that, if the heterogeneous consumers follow the CEO model, $\rho \simeq |C'|/|C|$ with a variation of less than 0.003 and hence can be practically treated as a constant. Moreover, we make the following approximation:
\begin{equation}
\sum\nolimits_{j \in C \setminus C'} \! w_j(\lambda_k) \simeq (1 \! - \! \rho) \sum\nolimits_{j \in C} \! w_j(\lambda_k) = (1 \! - \! \rho) w(\lambda_k).
\label{eq:demand-approximate}
\end{equation}
The numerical evaluation in \cite{thistechreport} shows that relative approximation error of Eq.~(\ref{eq:demand-approximate}) is less than 1\%. Therefore, in the presence of integrity attacks, we have
\begin{equation}
w'(\lambda_k) \simeq \rho w(\lambda_k') + (1 - \rho) w(\lambda_k).
\label{eq:demand-under-attack}
\end{equation}

If the price signals can be arbitrarily modified, the capability requirements of an adversary would be high.
In this paper, we consider ``constrained'' integrity attacks, where the malicious modifications follow certain rules and can be realized with lower capability and resource requirements.
Note that the adversary must be able to cause more severe damage to the system if she is assumed to be able to modify the price signals arbitrarily.
An attack can be characterized by the parameters for the rule, which is denoted by $\mathcal{A}$. We consider two kinds of integrity attacks:

\vspace{.5em}
\noindent {\bf Scaling attack $\mathcal{A}=(\rho, \gamma)$:} The compromised price is a scaled version of the true price, i.e., $\lambda_k' = \gamma \lambda_k$, $\gamma \in \mathbb{R}^+$.

\vspace{.5em}
\noindent {\bf Delay attack $\mathcal{A}=(\rho, \tau)$:} The compromised price is an old price, i.e., $\lambda_k' = \lambda_{k-\tau}$, $\tau \in \mathbb{Z}^+$.
\vspace{.5em}


These two attacks can be launched in various ways. The price values or time stamps in data packets sent to the smart meters can be maliciously modified during transmissions in  vulnerable communication networks. Moreover, they can be launched in indirect ways. For instance, the delay attack can be launched by modifying the smart meters' internal clocks. Smart meters typically assign a memory buffer to store received prices. If a smart meter's clock has a lag, it will store newly received prices in the buffer and apply an old price for the present. Furthermore, attacks on the clocks can be realized by compromising vulnerable time synchronization protocols or the time servers that provide timing information to the smart meters. A few smart meter products~\cite{smart-meter-gps2} synchronize their clocks via a built-in GPS receiver, which is vulnerable and subject to remote attacks that are effective across large geographic areas~\cite{nighswander2012gps}.


In this paper, we assume that at most one kind of attack is in effect.
Moreover, we assume that the attack parameters are the same for all the compromised consumers. For instance, if a delay attack with $\tau=2$ is launched, all the compromised consumers experience the same delay of two pricing periods. These simplifications allow us to better understand the impact of each attack on the RTP system, which is the basis for understanding more complex scenarios such as heterogeneous attack parameters and combinations of attack types. In Section~\ref{subsec:superimpose}, we will briefly discuss how to extend our analysis to address these more complex cases.

\subsubsection{Attack Impact Metrics}

This section defines two metrics for the impact of the integrity attacks on system stability. We first define the {\em marginal demand-supply ratio}, which is a quantity that can significantly affect the system stability under attacks.
\begin{definition}
Marginal demand-supply ratio is $h = \left|\frac{\dot{w}(\lambda_o)}{\dot{s}(\lambda_o)}\right|$.
\label{def:incentive-ratio}
\end{definition}
From Definition~\ref{def:incentive-ratio}, $h$ depends on the operating point $\lambda_o$. As discussed in Section~\ref{sec:classical-control}, the gain coefficient $\eta$ of the price stabilization algorithm affects the system stability in a major way. Therefore, we define the following metric:
\begin{definition}
  Given attack $\mathcal{A}$, the {\em region of operating point stability} under attack, denoted by $\mathrm{ROS}_{\lambda_o}(\mathcal{A})$, is 
\begin{equation*}
\mathrm{ROS}_{\lambda_o}(\mathcal{A}) = \{ (h, \eta) | \text{The system is stable under attack $\mathcal{A}$} \}.
\end{equation*}
\end{definition}
The above metric depends on $\lambda_o$. We define a second metric that is independent of $\lambda_o$:
\begin{definition}
  Given attack $\mathcal{A}$, the {\em region of stability} under attack, denoted by $\mathrm{ROS}(\mathcal{A})$, is 
\begin{equation*}
\mathrm{ROS}(\mathcal{A}) = \{ \eta | \text{The system is stable under attack $\mathcal{A}$}, \forall h > 0\}.
\end{equation*}
\end{definition}


The above two metrics are important for understanding the impact of integrity attacks on the stability of the RTP system under the price stabilization algorithm in Proposition~\ref{PROP:ALGORITHM}. In particular, the $\mathrm{ROS}(\mathcal{A})$ specifies the range of $\eta$ that ensures system stability under attack $\mathcal{A}$. Hence, the ROS allows us to compare the impacts of different integrity attacks. For two attacks $\mathcal{A}_1$ and $\mathcal{A}_2$, if $\mathrm{ROS}(\mathcal{A}_1) \subset \mathrm{ROS}(\mathcal{A}_2)$, the ISO has more flexibility in setting $\eta$ under $\mathcal{A}_2$ than $\mathcal{A}_1$, to achieve faster convergence. Thus, the system is more resilient to $\mathcal{A}_2$ than $\mathcal{A}_1$. From the adversary's perspective, $\mathcal{A}_1$ is more effective than $\mathcal{A}_2$. Note that, when the RTP system with $\eta \in \mathrm{ROS}(\mathcal{A})$ is stable under attack $\mathcal{A}$, the compromised consumers may still experience monetary losses and the system may run at low efficiency. However, this paper focuses on the impact of attacks on the system stability, which is a fundamental system requirement.
In Sections~\ref{subsec:resilience-scaling} and \ref{subsec:resilience-delay}, we will derive the $\mathrm{ROS}_{\lambda_o}$ and $\mathrm{ROS}$ for the scaling and delay attacks.

\subsection{Impact of Scaling Attack}
\label{subsec:resilience-scaling}


The local linearization of Eq.~(\ref{eq:demand-under-attack}) with $\lambda_k'=\gamma \lambda_k$ is
\begin{align*}
w'(\lambda_k) \simeq & \rho \cdot ( w(\gamma \lambda_o) + \dot{w}(\gamma \lambda_o) \cdot (\gamma \lambda_k - \gamma \lambda_o) )\\
& + (1-\rho) \cdot ( w(\lambda_o) + \dot{w}(\lambda_o) \cdot (\lambda_k - \lambda_o) ).
\end{align*}
By collecting the price-independent terms with $b_k$, the transfer function of the price-dependent component is $G_w(z) = \rho \gamma \dot{w}(\gamma \lambda_o) + (1 - \rho) \dot{w}(\lambda_o)$. To make the analysis tractable, for the scaling attack, we only focus on a class of price-responsive demand models that satisfy $\dot{w}(\gamma \lambda) = \dot{w}(\lambda) \cdot \mu(\gamma | \Theta)$, where $\Theta$ is the set of model parameters of $w(\cdot)$ and the function $\mu(\gamma | \Theta)$ is independent of $\lambda$ and always positive. Such a $\dot{w}(\cdot)$ is said to be {\em decomposable}. For instance, under the CEO model, $\Theta=\{D, \epsilon\}$, and $\mu(\gamma | \Theta) = \gamma^{\epsilon-1}$. For simplicity of exposition, we denote $\mu(\gamma | \Theta)$ as $\mu$ in the rest of this paper. Therefore, $G_w(z) = \rho \gamma \mu \dot{w}(\lambda_o) + (1 - \rho) \dot{w}(\lambda_o)$, and $G_p(z) = G_s(z) - G_w(z) = \dot{s}(\lambda_o) - \rho \gamma \mu \dot{w}(\lambda_o) - (1 - \rho) \dot{w}(\lambda_o)$. The closed-loop transfer function \cite{ogata1995discrete} under the attack is
\begin{equation*}
T_c(z) = \frac{G_c(z) G_p(z)}{1 \! + \! G_c(z) G_p(z) H(z)} \! = \! \frac{2 \eta ( 1 \! + \! \rho \gamma \mu h \! + \! h \! - \! \rho h)z}{P(z)},
\end{equation*}
where the {\em system characteristic function} $P(z)= (h + 1)(z - 1) + 2\eta (1 + \rho \gamma \mu h + h - \rho h)$. Note that $G_p(z)$, $H(z)$, and $G_c(z)$ have been obtained in Section~\ref{sec:classical-control}.

\subsubsection{Region of Operating Point Stability}

\begin{proposition}
For the linearized system based on a fixed operating point $\lambda_o$ and a decomposable $\dot{w}(\cdot)$, $\mathrm{ROS}_{\lambda_o}(\rho, \gamma) = \{ (h, \eta) | 0 < \eta < \min \{ 1,  \bar{\eta} \}, \forall h > 0 \}$, where
\begin{equation}
\bar{\eta} = \frac{h+1}{h + 1 + \rho h (\gamma \mu - 1)}.
\label{eq:bar-eta}
\end{equation}
\label{PROP:SCALING-ROS}
\end{proposition}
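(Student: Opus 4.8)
The plan is to invoke the standard discrete-time criterion that the closed loop is BIBO stable if and only if every pole, i.e.\ every root of the characteristic function $P(z)$, lies strictly inside the unit circle $|z|<1$. Since $G_c(z)G_p(z)H(z)$ is first order, the function $P(z)=(h+1)(z-1)+2\eta(1+\rho\gamma\mu h+h-\rho h)$ is affine in $z$ and therefore has a single, necessarily real, root. Abbreviating the bracketed quantity as $A = h+1+\rho h(\gamma\mu-1)$, which is exactly the denominator of $\bar{\eta}$ in Eq.~(\ref{eq:bar-eta}), solving $P(z)=0$ gives the unique pole
\[
z^* = 1 - \frac{2\eta A}{h+1}.
\]

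First I would impose $z^* < 1$. Because $\eta>0$ and $h+1>0$, this is equivalent to $A>0$; intuitively, the scaling attack must not flip the effective sign of the net slope $G_p(z)=\dot{s}(\lambda_o)-\rho\gamma\mu\dot{w}(\lambda_o)-(1-\rho)\dot{w}(\lambda_o)$ seen by a controller that was tuned for the unattacked loop. Next I would impose $z^* > -1$, which rearranges to $\eta A < h+1$. Provided $A>0$, dividing by $A$ yields $\eta < (h+1)/A = \bar{\eta}$. Intersecting these two pole conditions with the design constraint $\eta\in(0,1)$ already built into the algorithm of Proposition~\ref{PROP:ALGORITHM} gives precisely $0<\eta<\min\{1,\bar{\eta}\}$, which is the claimed $\mathrm{ROS}_{\lambda_o}(\rho,\gamma)$.

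The only real obstacle is bookkeeping the sign of $A$ while manipulating the inequalities, and I would handle it explicitly. When $A>0$ the two steps above go through verbatim and $\bar{\eta}>0$, so the interval is nonempty. When $A\le 0$ (a sufficiently aggressive scaling-down attack, i.e.\ $\gamma\mu$ small enough), the condition $z^*<1$ fails for every $\eta>0$, so no stabilizing gain exists; this is consistent with the stated formula because $\bar{\eta}\le 0$ makes $(0,\min\{1,\bar{\eta}\})$ empty. Thus the single closed-form interval $0<\eta<\min\{1,\bar{\eta}\}$ captures both the stable and the unstabilizable regimes. Finally, I would note that the numerator $2\eta(1+\rho\gamma\mu h+h-\rho h)z$ of $T_c(z)$ contributes only a zero at the origin and hence plays no role in the stability test, and that the entire argument is carried out for each fixed $h>0$, so the result describes $\mathrm{ROS}_{\lambda_o}(\rho,\gamma)$ as a region in the $(h,\eta)$ plane as asserted.
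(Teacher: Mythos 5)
Your proposal is correct and takes essentially the same route as the paper's (much terser) proof: the paper likewise observes that $P(z)$ is first order, requires its single root to lie inside the unit circle, which gives $\eta < \bar{\eta}$, and intersects with the design constraint $\eta \in (0,1)$ to obtain $\min\{1,\bar{\eta}\}$. Your extra case $A \le 0$ is actually vacuous rather than a genuine regime, since $\rho \in (0,1]$, $\gamma > 0$, $\mu > 0$, and $h > 0$ give $A = h + 1 + \rho h(\gamma\mu - 1) \ge 1 + \rho h \gamma\mu > 0$ always, but including it does no harm.
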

\begin{proof}
  If all the poles of $T_c(z)$ (i.e., roots of $P(z)$) are within the unit circle centered at the origin of $z$-plane, the system is stable \cite{ogata1995discrete}. If $\eta < \bar{\eta}$, the pole is within the circle. As $\eta \in (0,1)$, $\eta$ takes the minimum of 1 and $\bar{\eta}$.
\end{proof}

\begin{remark}
Under the CEO demand model, by replacing $\mu = \gamma^{\epsilon - 1}$ in Eq.~(\ref{eq:bar-eta}), we have $\bar{\eta} = \frac{h+1}{h + 1 + \rho h (\gamma^\epsilon - 1)}$. Fig.~\ref{fig:scaling-point-ros} plots the stability boundaries when $\epsilon=-0.8$, where the $\mathrm{ROS}_{\lambda_o}$ are the regions below the boundaries. We can see that the $\mathrm{ROS}_{\lambda_o}$ shrinks with increased $\rho$ and decreased $\gamma$. This can be easily proved by the monotonicity of $\bar{\eta}$. Moreover, it is consistent with the intuitions that (i) the system becomes more unstable when more consumers are compromised, and (ii) the increased demand due to a decreased $\gamma$ poses more challenges to the system.
\end{remark}

We now use the numerical example in Fig.~\ref{fig:running-example}(b) to verify our analysis. Fig.~\ref{fig:running-example}(b) shows the price signals received by the suppliers and consumers, respectively, when $\gamma=0.57$. We can see that the price does not converge. The average value of $h$ is $0.850$, which falls in the unstable region ($h > 0.786$) according to the analytical $\mathrm{ROS}_{\lambda_o}$. Note that when $\gamma=0.59$, the price converges and the average value of $h$ is $0.862$, which falls in the stable region ($h < 0.908$) according to the analytical $\mathrm{ROS}_{\lambda_o}$. Therefore, Proposition~\ref{PROP:SCALING-ROS} successfully characterizes the critical stability boundary. Note that, as the settings for Fig.~\ref{fig:running-example}(b) are close to the stability boundary, the price oscillates in a small range. For smaller $\gamma$, the price can severely oscillate, as shown in Section~\ref{sec:sim}.

\begin{figure}
  \subfigure[$\rho=1$]{
    \includegraphics{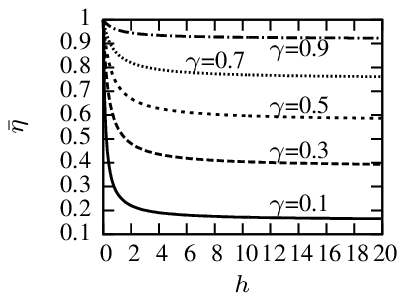}
  }
  \subfigure[$\gamma=0.5$]{
    \includegraphics{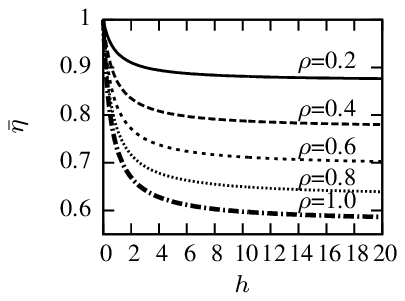}
  }
  \caption{Stability boundaries under scaling attack, abstract supply model, and CEO demand model ($\epsilon=-0.8$, $\mathrm{ROS}_{\lambda_o}$ are the regions below the boundaries).}
  \label{fig:scaling-point-ros}
\end{figure}

\subsubsection{Region of Stability}

\begin{proposition}
For the linearized system based on a decomposable $\dot{w}(\cdot)$, when $\gamma \mu \in (0, 1]$, $\mathrm{ROS}(\rho, \gamma) = \{ \eta | 0 < \eta < 1\}$; when $\gamma \mu > 1$, $\mathrm{ROS}(\rho, \gamma) = \left\{ \eta \left| 0 < \eta < \inf_{h > 0} \bar{\eta} \right. \right\}$, where $\inf_{h > 0} \bar{\eta} = \frac{1}{1 + \rho ( \gamma \mu - 1) }$.
\label{PROP:SCALING-ROAS}
\end{proposition}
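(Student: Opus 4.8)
The plan is to build directly on Proposition~\ref{PROP:SCALING-ROS}, which already pins down stability for a \emph{fixed} operating point. For fixed $\lambda_o$ (equivalently, a fixed value of $h$), that proposition says the system is stable precisely when $0 < \eta < \min\{1, \bar\eta(h)\}$, with $\bar\eta(h) = \frac{h+1}{h+1+\rho h(\gamma\mu-1)}$. Since $\mathrm{ROS}(\rho,\gamma)$ collects by definition those $\eta$ for which the system is stable for \emph{every} $h > 0$, I would simply intersect the fixed-$h$ stability intervals: $\eta \in \mathrm{ROS}(\rho,\gamma)$ iff $0 < \eta < \min\{1,\bar\eta(h)\}$ holds for all $h>0$. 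The entire proof then reduces to analyzing the one-variable function $\bar\eta(h)$ and computing $\inf_{h>0}\min\{1,\bar\eta(h)\}$, which I would do by splitting on the sign of $\gamma\mu - 1$ (recall $\mu>0$ and $\gamma\in\mathbb{R}^+$, so $\gamma\mu\in(0,1]$ or $\gamma\mu>1$ are exhaustive).

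First I would treat $\gamma\mu \in (0,1]$. Here $\rho h(\gamma\mu-1)\le 0$, so the denominator of $\bar\eta(h)$ is at most its numerator $h+1$, giving $\bar\eta(h)\ge 1$ for all $h>0$. One should also check the denominator stays positive, which it does since its $h$-coefficient is $1+\rho(\gamma\mu-1)\ge 1-\rho\ge 0$, so $(1+\rho(\gamma\mu-1))h+1\ge 1>0$. Consequently $\min\{1,\bar\eta(h)\}=1$ uniformly in $h$, and intersecting over $h$ yields $\mathrm{ROS}(\rho,\gamma)=\{\eta\mid 0<\eta<1\}$, as claimed.

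Next I would handle $\gamma\mu > 1$. Now $\rho h(\gamma\mu-1)>0$ forces the denominator to exceed the numerator, so $\bar\eta(h)<1$ and $\min\{1,\bar\eta(h)\}=\bar\eta(h)$; the binding constraint is $\bar\eta$ itself. Writing $c=\rho(\gamma\mu-1)>0$ and differentiating $\bar\eta(h)=\frac{h+1}{(1+c)h+1}$ gives $\bar\eta'(h)=\frac{-c}{((1+c)h+1)^2}<0$, so $\bar\eta$ is strictly decreasing on $(0,\infty)$. Hence its infimum is the limit as $h\to\infty$, namely $\lim_{h\to\infty}\bar\eta(h)=\frac{1}{1+c}=\frac{1}{1+\rho(\gamma\mu-1)}$, which delivers $\mathrm{ROS}(\rho,\gamma)=\{\eta\mid 0<\eta<\frac{1}{1+\rho(\gamma\mu-1)}\}$ and matches the stated $\inf_{h>0}\bar\eta$.

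The one point needing care — and the closest thing to an obstacle — is that in the $\gamma\mu>1$ case this infimum is a limit that is \emph{not attained} at any finite $h$. I would emphasize that strict monotonicity is exactly what certifies that the infimum equals the $h\to\infty$ limit, with no interior minimum to be missed, so that the constraint on $\eta$ tightens monotonically as $h$ grows and the limiting value $\frac{1}{1+\rho(\gamma\mu-1)}$ is the governing bound. I would also flag the well-definedness check (positivity of the denominator of $\bar\eta$ over the full range $h>0$), since the fixed-$h$ criterion of Proposition~\ref{PROP:SCALING-ROS} must be applicable throughout. Everything else is the routine monotonicity and limit computation sketched above.
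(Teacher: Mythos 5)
Your proof is correct and follows essentially the same route as the paper: it builds on Proposition~\ref{PROP:SCALING-ROS}, splits on the sign of $\gamma\mu-1$, observes $\bar{\eta}\ge 1$ in the first case, and uses the monotone decrease of $\bar{\eta}$ in $h$ to identify $\inf_{h>0}\bar{\eta}$ as the $h\to+\infty$ limit $\frac{1}{1+\rho(\gamma\mu-1)}$ in the second, with your version merely more explicit (the derivative computation and the denominator-positivity check). The one caveat, shared equally by the paper's own proof, is that since this infimum is not attained at any finite $h$, a literal intersection of the fixed-$h$ stability intervals would also contain the endpoint $\eta=\frac{1}{1+\rho(\gamma\mu-1)}$, so both arguments really establish sufficiency of the stated open interval rather than exact set equality.
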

\begin{proof}
When $\gamma \mu \in (0, 1]$, $\bar{\eta} \ge 1$. From Proposition~\ref{PROP:SCALING-ROS}, if $0 < \eta < 1$, the system is stable regardless of $h$. When $\gamma \mu > 1$, $\bar{\eta}$ is a bounded decreasing function of $h$. Its infimum $\inf_{h > 0} \bar{\eta} = \lim_{h \to +\infty} \bar{\eta} = \frac{1}{1 + \rho ( \gamma \mu - 1) }$. Therefore, if $0 < \eta < \inf_{h>0}\bar{\eta}$, the system is stable regardless of $h$.
\end{proof}

\begin{remark}
Under the CEO demand model, replacing $\mu = \gamma^{\epsilon-1}$ in Proposition~\ref{PROP:SCALING-ROAS} yields the following result. When $\gamma \ge 1$, $\mathrm{ROS}(\rho, \gamma) = \{ \eta | 0 < \eta < 1\}$; when $\gamma \in (0,1)$, $\mathrm{ROS}(\rho, \gamma) = \left\{ \eta \left| 0 < \eta < \inf_{h > 0} \bar{\eta} \right. \right\}$, where $\inf_{h > 0} \bar{\eta} = \frac{1}{1 + \rho ( \gamma^\epsilon - 1) }$. Therefore, under the CEO model, if the adversary amplifies the price, the system remains stable. This result is consistent with the intuition that decreased demand due to the amplified price poses no challenges to the system. Fig.~\ref{fig:scaling-limit} plots $\inf_{h > 0} \bar{\eta}$ when $\epsilon=-0.8$. We can see that $\mathrm{ROS}$ shrinks with increased $\rho$ and decreased $\gamma$. This can be easily proved by the monotonicity of $\inf_{h > 0} \bar{\eta}$.
\end{remark}

\subsection{Impact of Delay Attack}
\label{subsec:resilience-delay}

The local linearization of Eq.~(\ref{eq:demand-under-attack}) with $\lambda_k' = \lambda_{k - \tau}$ is
\begin{align*}
w'(\lambda_k) \simeq &\rho \cdot (w(\lambda_o) + \dot{w}(\lambda_o) \cdot (\lambda_{k-\tau} - \lambda_o)) \\
&+ (1-\rho) \cdot (w(\lambda_o) + \dot{w}(\lambda_o) \cdot (\lambda_k - \lambda_o)).
\end{align*}
By collecting the price-independent terms with $b_k$, the transfer function of the price-dependent component is $G_w(z)=z^{-\tau} \rho \dot{w}(\lambda_o) - (1-\rho) \dot{w}(\lambda_o)$, where $z^{-\tau}$ represent a delay of $\tau$ pricing periods. Therefore, $G_p(z) = G_s(z) - G_w(z) = \dot{s}(\lambda_o) - z^{-\tau} \rho \dot{w}(\lambda_o) - (1-\rho) \dot{w}(\lambda_o)$.
The closed-loop transfer function under the attack is
\begin{equation*}
T_c(z) \! = \! \frac{G_c(z) G_p(z)}{1 \! + \! G_c(z) G_p(z) H(z)} \! = \! \frac{2\eta (1 \! + \! (1 \! - \! \rho)h) z^{\tau \! + \! 1} \! + \! 2 \rho \eta h z}{P(z)},
\end{equation*}
where the system characteristic function is
\begin{equation*}
P(z) = (h+1) z^{\tau + 1} + \left( 2 \eta + 2\eta(1-\rho)h - h - 1 \right) z^\tau + 2\eta \rho h.
\end{equation*}

\subsubsection{Region of Operating Point Stability}

\begin{figure}
  \centering
  \subfigure[$\rho=1$]{
    \includegraphics[width=.222\textwidth]{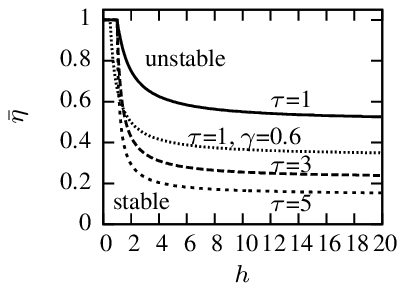}
    \label{fig:stability-boundary1}
  }
  \subfigure[$\tau=4$]{
    \includegraphics[width=.222\textwidth]{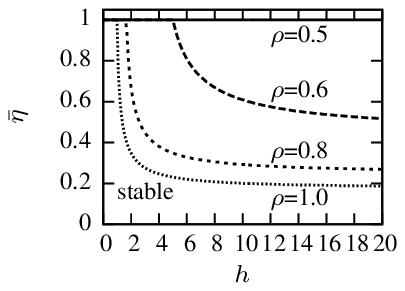}
    \label{fig:stability-boundary2}
  }
  \caption{Stability boundaries under delay attack and abstract supply/demand models. (In the left figure, the curve with $\tau=1$ and $\gamma=0.6$ is for combined attack discussed in Section~\ref{subsec:superimpose}.)}
  \label{fig:stability-boundary}
\end{figure}

As $P(z)$ is a $(\tau+1)$-order polynomial, it is extremely difficult to derive the closed-form formulas for the poles of $T_c(z)$.
Various methods have been developed to test the stability without explicitly solving for the poles \cite{ogata1995discrete}. Among them, the Jury test~\cite[p.~185]{ogata1995discrete} is preferred because the coefficients of $P(z)$ are real numbers. The Jury test constructs a table based on the coefficients of $P(z)$ and derives the stability conditions from the table. Given $\rho$, we can derive the closed-form $\mathrm{ROS}_{\lambda_o}$ for different $\tau$ from the Jury test.
However, the expressions become more complicated for larger $\tau$.
We numerically compute the $\mathrm{ROS}_{\lambda_o}$ based on the Jury test for various settings of $\tau$ and $\rho$. Fig.~\ref{fig:stability-boundary} plots the stability boundaries, where the $\mathrm{ROS}_{\lambda_o}$ are the regions below the boundaries. From Fig.~\ref{fig:stability-boundary}, the $\mathrm{ROS}_{\lambda_o}$ shrinks with $\tau$ and $\rho$, which is consistent with intuition.
We have the following proposition. The proof is based on the Jury test,
which is omitted due to space constraints and can be found in \cite{thistechreport}.
\begin{proposition}
For the linearized system with a fixed operating point $\lambda_o$, $\mathrm{ROS}_{\lambda_o}(\rho, \tau + 1) \subseteq \mathrm{ROS}_{\lambda_o}(\rho, \tau)$.
\label{PROP:ROS}
\end{proposition}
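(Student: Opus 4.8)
The plan is to recast the delay-$\tau$ stability test as a Schur stability condition on a monic trinomial and then exploit the self-similar structure of the Jury (Schur--Cohn) reduction. Dividing the system characteristic function $P(z)$ by its positive leading coefficient $h+1$, the delay-$\tau$ system is stable iff all roots of $\Phi_\tau(z)=z^{\tau+1}+\beta z^{\tau}+\kappa$ lie strictly inside the unit circle, where $\beta=\frac{2\eta+2\eta(1-\rho)h-h-1}{h+1}$ and $\kappa=\frac{2\eta\rho h}{h+1}$. Two feasibility relations will do the heavy lifting: since $\rho,h\ge 0$ and $\eta>0$ we have $\kappa\ge 0$, and a one-line computation gives $\beta+\kappa=2\eta-1$, so $\eta\in(0,1)$ forces $\beta+\kappa\in(-1,1)$. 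The crucial observation is that the delay-$\tau$ and delay-$(\tau+1)$ trinomials share the \emph{same} $\beta$ and $\kappa$ and differ only in degree, so the goal reduces to showing, for every feasible $(h,\eta)$, that Schur stability of $\Phi_{\tau+1}$ implies that of $\Phi_\tau$.

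First I would apply one step of the Jury reduction to $\Phi_\tau$ and note that it returns another monic trinomial: with the reciprocal polynomial $\Phi_\tau^\ast(z)=\kappa z^{\tau+1}+\beta z+1$, the reduced polynomial $\frac{1}{z(1-\kappa^2)}\bigl(\Phi_\tau-\kappa\,\Phi_\tau^\ast\bigr)$ equals $z^{\tau}+\beta_1 z^{\tau-1}+\kappa_1$ with $\beta_1=\frac{\beta}{1-\kappa^2}$ and $\kappa_1=\frac{-\kappa\beta}{1-\kappa^2}$. This update of $(\beta,\kappa)$ is independent of the degree, so iterating it produces a \emph{single} reflection-coefficient sequence $\kappa=\kappa^{(0)},\kappa^{(1)},\kappa^{(2)},\dots$ common to all delays. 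Standard Jury theory then says that Schur stability of the degree-$(\tau+1)$ trinomial is equivalent to the reflection conditions $|\kappa^{(j)}|<1$ for $j=0,\dots,\tau-1$ together with the two boundary conditions $\Phi_\tau(1)>0$ and $(-1)^{\tau+1}\Phi_\tau(-1)>0$.

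The key simplification is that the boundary conditions are automatically satisfied in the feasible region and hence never constrain the $\mathrm{ROS}_{\lambda_o}$: one has $\Phi_\tau(1)=2\eta>0$, while $(-1)^{\tau+1}\Phi_\tau(-1)>0$ reduces, according to the parity of $\tau$, to either $\beta+\kappa<1$ or $\beta<1+\kappa$, both of which follow at once from $\beta+\kappa=2\eta-1<1$ and $\kappa\ge 0$. Therefore the only \emph{active} constraints are the nested reflection inequalities. Because $\{\kappa^{(0)},\dots,\kappa^{(\tau-1)}\}$ is a prefix of $\{\kappa^{(0)},\dots,\kappa^{(\tau)}\}$, stability of $\Phi_{\tau+1}$ (which requires $|\kappa^{(j)}|<1$ for $j=0,\dots,\tau$) immediately yields $|\kappa^{(j)}|<1$ for $j=0,\dots,\tau-1$; together with the always-valid boundary conditions this gives stability of $\Phi_\tau$, i.e. $\mathrm{ROS}_{\lambda_o}(\rho,\tau+1)\subseteq\mathrm{ROS}_{\lambda_o}(\rho,\tau)$.

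I expect the main obstacle to be the careful treatment of the tail of the reduction: the clean prefix argument is only valid once I confirm that the terminal degree-$2$ Jury test contributes no extra, non-nested constraint. Concretely, the last generic reduced polynomial is a quadratic $z^2+\beta^{(\tau-1)}z+\kappa^{(\tau-1)}$, whose stability adds, beyond $|\kappa^{(\tau-1)}|<1$, the inequality $|\beta^{(\tau-1)}|<1+\kappa^{(\tau-1)}$, and I must show this does not escape the nesting. I would dispatch it with a short propagation lemma: each reduction step satisfies $\Phi_{j+1}(1)=\Phi_j(1)/(1+\kappa^{(j)})$ and $(-1)^{\deg\Phi_{j+1}}\Phi_{j+1}(-1)$ equals $(-1)^{\deg\Phi_j}\Phi_j(-1)$ times the positive factor $\frac{1-\kappa^{(j)}(-1)^{\deg\Phi_j}}{1-(\kappa^{(j)})^2}$, so the quadratic endpoint's two sign conditions are positive multiples of $\Phi_\tau(\pm 1)$ and therefore collapse onto the already-free boundary conditions. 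Establishing this equivalence is the delicate part; once it is in place, the degree-independence of the reflection sequence makes the required set inclusion immediate, and the argument needs no explicit solution of the $(\tau+1)$-order pole equation.
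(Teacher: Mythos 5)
Your proposal is correct and takes essentially the same route as the paper's proof: the paper likewise runs the Jury test on the trinomial characteristic polynomial $P(z)$, observes that the table rows for delay $\tau+1$ coincide with those for delay $\tau$ up to an inserted zero (your degree-independent $(\beta,\kappa)$ recursion is exactly the normalized form of this row structure), notes that the boundary conditions $P(z)|_{z=1}>0$ and $(-1)^{\tau+1}P(z)|_{z=-1}>0$ hold automatically over the feasible parameter range, and concludes that the delay-$(\tau+1)$ table contributes exactly one additional constraint, giving the nesting. The differences are presentational only: you work with monic trinomials and reflection coefficients and prove the endpoint-propagation lemma explicitly, whereas the paper works with unnormalized Jury rows $[J_{i,1},J_{i,2},0,\dots,0,J_{i,3}]$ and the conditions $|J_{i,1}|>|J_{i,3}|$.
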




We now use the numerical example in Fig.~\ref{fig:running-example} to verify our analysis. Fig.~\ref{fig:running-example}(c) shows the price signals received by the suppliers and consumers, respectively, when $\eta=0.2$, $\rho=1$, $\tau=12$. We can see that the price diverges.
The average value of $h$ is $1.455390$, which falls in the unstable region ($h >1.447$) according to the Jury test. Note that when $\tau=11$, the price does converge and the average value of $h$ is $1.455335$, which falls in the stable region ($h < 1.522$) according to the Jury test. Therefore, the Jury test successfully characterizes the critical stability boundary. As the settings for Fig.~\ref{fig:running-example}(c) are close to the stability boundary, the price diverges slowly. For larger $\tau$, the price can diverge quickly.

\subsubsection{Region of Stability}

We observe from Fig.~\ref{fig:stability-boundary2} that, when $\rho=0.5$, the system is stable for $\eta \in (0,1)$. We have the following proposition.
\begin{proposition}
For the linearized system, if $\rho \in (0, 0.5]$, $\forall \tau \in \mathbb{Z}^+$, $\mathrm{ROS}(\rho, \tau) = \{\eta | 0 < \eta < 1 \}$.
\label{PROP:HARD-DELAY-ATTACK}
\end{proposition}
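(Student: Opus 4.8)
The plan is to avoid the Jury tableau altogether and instead count roots directly with Rouch\'e's theorem on the unit circle, which has the decisive advantage of handling every delay $\tau$ simultaneously rather than order by order. Recall that the system is stable precisely when all roots of the characteristic function
\[
P(z) = (h+1) z^{\tau+1} + A\,z^\tau + B, \qquad A = 2\eta + 2\eta(1-\rho)h - h - 1,\quad B = 2\eta\rho h,
\]
lie strictly inside $|z|<1$. Since the gain is constrained to $\eta\in(0,1)$, the inclusion $\mathrm{ROS}(\rho,\tau)\subseteq\{\eta\mid 0<\eta<1\}$ is immediate from the definition, so it suffices to prove the reverse: for every $\eta\in(0,1)$, every $\rho\in(0,\tfrac12]$, and every $h>0$, all $\tau+1$ roots of $P$ are inside the unit disk.

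On $|z|=1$ the leading term satisfies $|(h+1)z^{\tau+1}| = h+1$, while $|A z^\tau + B| \le |A| + |B|$. Hence the whole argument collapses to a single uniform bound, namely $|A|+|B| < h+1$: once this holds, Rouch\'e's theorem guarantees that $P$ and $(h+1)z^{\tau+1}$ have the same number of zeros in $|z|<1$, i.e.\ all $\tau+1$ of them, and stability follows. I would establish the bound by splitting on the sign of $A$. When $A\ge 0$, a short cancellation yields $|A|+|B| = (h+1)(2\eta-1)$, which is strictly below $h+1$ exactly because $\eta<1$; one checks moreover that $A\ge 0$ already forces $\eta\ge\tfrac12$, so this branch is internally consistent and never overlaps the small-$\eta$ regime. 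When $A<0$, the same bookkeeping gives $|A|+|B| = (h+1) - 2\eta(1+h) + 4\eta\rho h$, so the target inequality is equivalent, after dividing by $2\eta>0$, to $2\rho h < 1+h$, that is $\rho < \frac{1+h}{2h}$.

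This last step is where the hypothesis $\rho\le\tfrac12$ is consumed, and it is the only delicate point. Since $\frac{1+h}{2h} = \tfrac12 + \frac{1}{2h} > \tfrac12 \ge \rho$ for every finite $h>0$, the inequality is strict across the entire range of $h$, including the boundary value $\rho=\tfrac12$ (where strictness survives solely because $\frac{1}{2h}>0$). With $|A|+|B|<h+1$ verified in both branches, Rouch\'e applies uniformly in $h$ and $\tau$, so every $\eta\in(0,1)$ lies in $\mathrm{ROS}(\rho,\tau)$ and the claimed equality follows. The main obstacle is therefore not the root count but this uniform modulus estimate: the genuine care lies in confirming that the boundary $\rho=\tfrac12$ stays admissible for all $h>0$ (not merely asymptotically) and that the two sign-branches of $A$ partition the parameter space correctly, so that the single inequality truly covers the whole admissible region.
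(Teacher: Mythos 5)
Your proposal is correct, and it takes a genuinely different route from the paper's own proof. The paper works directly with the poles: it writes each root of $P(z)$ in polar form $z = A(\cos\theta + i\sin\theta)$, splits $P(z)=0$ into real and imaginary parts, adds their squares to obtain a real equation $g(A)=0$ in the modulus, and then excludes roots with $A \ge 1$ by showing $g(1) > 0$ together with $\dot{g}(A) > 0$ for $A \ge 1$, the latter requiring a convexity argument for an auxiliary function $m(A)$ whose minimizer must be shown to lie below $1$. Your Rouch\'e argument replaces that entire monotonicity machinery with a single boundary estimate: on $|z|=1$ the perturbation satisfies $|Az^\tau + B| \le |A|+|B| < h+1 = |(h+1)z^{\tau+1}|$, so all $\tau+1$ roots sit strictly inside the disk, uniformly in $\tau$ and $h$. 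It is worth noting that your key inequality $|A|+|B| < h+1$ is exactly the inequality hidden inside the paper's proof that $g(1)>0$ (there it appears, in the paper's notation, as $u_1 - |u_2| - u_3 > 0$, established by the same case split on the sign of $u_2 = A$, with the case $u_2<0$ consuming $\rho \le \tfrac12$ just as your second branch does); what Rouch\'e buys you is that this one estimate already finishes the proof, whereas the paper must separately propagate it to all $A \ge 1$ via the derivative analysis. The trade-off is that your argument invokes a theorem from complex analysis, while the paper's is elementary and self-contained. Your bookkeeping is also verifiably right: for $A \ge 0$, $|A|+|B| = (h+1)(2\eta-1) < h+1$ since $\eta<1$; for $A<0$, the condition reduces after dividing by $2\eta$ to $2\rho h < 1+h$, which holds strictly for all $h>0$ whenever $\rho \le \tfrac12$; and the strict inequality on the contour also rules out roots on $|z|=1$ itself, so stability follows.
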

The proof can be found in the appendix, where we prove that if $\rho \in (0,0.5]$, all roots of $P(z)$ are within the unit circle centered at the origin in the $z$-plane and hence the system is stable~\cite{ogata1995discrete}. From Proposition~\ref{PROP:HARD-DELAY-ATTACK}, to launch a successful delay attack that destabilizes the system, the adversary has to compromise no less than a half of the consumers. The intuition behind this result is that the compromised price-responsive load must predominate to affect the operation of the system.
This result poses strong requirements for the adversary. However, she could accomplish the goal by targeting shared infrastructures such as the time servers that provide timing information to all the smart meters. On the other hand, the need for the adversary to compromise a large fraction of the meters in order to be effective is indicative of the resilience of the price stabilization algorithm given by Proposition~\ref{PROP:ALGORITHM} to delay attacks.


\renewcommand{\algorithmicrequire}{\textbf{Input:}}
\renewcommand{\algorithmicensure}{\textbf{Output:}}
\renewcommand{\algorithmiccomment}[1]{// #1}
\begin{algorithm}[t]
\caption{Compute $\mathrm{ROS}(\rho, \tau)$ when $\rho \in (0.5,1]$}
\label{alg:symbolic}
\begin{algorithmic}[1]
\scriptsize
\REQUIRE $\rho$ and $\tau$
\ENSURE $\lim_{h \to +\infty} \bar{\eta}(h | \rho, \tau)$
\IF{$\tau=1$}
\RETURN $1/(2\cdot \rho)$
\ENDIF
\STATE $\mathbf{X} = 1$, $\mathbf{Y} = 2\eta\rho$, $\mathbf{Z} = 2\eta\rho(1-2\eta(1-\rho))$
\label{ln:1}
\FOR{$i=1$ to $\tau$}
\STATE $\mathbf{U} = \mathbf{X} \cdot \mathbf{X} - \mathbf{Y} \cdot \mathbf{Y}$, $\;$ $\mathbf{V} = \mathbf{Z}$, $\;$ $\mathbf{W} = (\mathbf{X} \cdot \mathbf{Z} \cdot \mathbf{Z}) / \mathbf{Y}$
\STATE $\mathbf{X} = \mathbf{U}$, $\;$ $\mathbf{Y} = \mathbf{V}$, $\;$ $\mathbf{Z} = \mathbf{W}$
\ENDFOR
\STATE $\mathbf{Q}(\eta | \rho) = \mathbf{X} \cdot \mathbf{X} - \mathbf{Y} \cdot \mathbf{Y} - \mathbf{Z}$
\label{ln:7}
\RETURN minimum root of $\mathbf{Q}(\eta | \rho) = 0$ over $\eta \in (0,1)$
\label{ln:8}
\end{algorithmic}
\vspace{-0.5em}
\line(1,0){240}\\
\scriptsize Note: Line~\ref{ln:1} to Line~\ref{ln:7} are symbolic calculation, where the bold capitals are symbolic expressions of $\eta$ and $\rho$.
\end{algorithm}

We now discuss the ROS when $\rho \in (0.5,1]$. From Fig.~\ref{fig:stability-boundary}, the stability boundary curves are non-increasing and converge to limits when $h \to +\infty$. Let $\bar{\eta}(h | \rho, \tau)$ denote the stability boundary curve for particular $\rho$ and $\tau$. Therefore,
\begin{equation*}
\mathrm{ROS}(\rho, \tau) = \{ \eta | 0 < \eta < \lim\nolimits_{h \to +\infty} \bar{\eta}(h | \rho, \tau) \}.
\end{equation*}
When $\tau = 1$, the limit is simply $\frac{1}{2\rho}$. However, for larger $\tau$, it is extremely difficult to derive the closed-form formula for the limit, primarily because of the iterative nature of the Jury test. Instead, we use an algorithm to define $\mathrm{ROS}(\rho, \tau)$, which is shown in Algorithm~\ref{alg:symbolic}. This algorithm is developed based on  key observations from the Jury test procedure.
Fig.~\ref{fig:jury-limit} plots $\lim_{h \to +\infty} \bar{\eta}(h | \rho, \tau)$, which is computed by Algorithm~\ref{alg:symbolic}, versus $\tau$ under various settings of $\rho$. We also use the Jury test to compute $\bar{\eta}(h | \rho, \tau)$ with a large setting for $h$ (specifically, $h=10^{10}$). The results are the same as in Fig.~\ref{fig:jury-limit}. From the figure, we can see that the ROS shrinks with $\rho$ and $\tau$, which is consistent with intuition.

\begin{figure}
  \centering
  \subfigure[Scaling attack (abstract supply model and CEO demand model with $\epsilon$=-0.8)]{
    \includegraphics{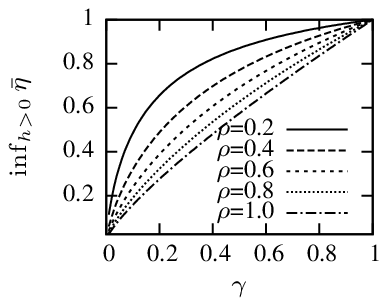}
    \label{fig:scaling-limit}
  }
  \hspace{.05in}
  \subfigure[Delay attack (abstract supply and demand models)]{
    \includegraphics{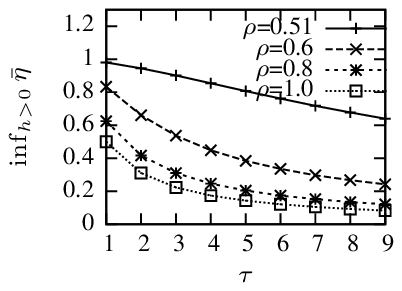}
    \label{fig:jury-limit}
  }
  \vspace{-0.15in}
  \caption{Upper bound of ROS.}
\end{figure}

\section{Discussions}
\label{sec:discuss}

In this section, we discuss the impact of inaccuracy in estimating $\dot{w}(\lambda_o)$ on the analysis in the previous sections. We also discuss how to extend our analysis to address more complicated attack models.

\subsection{Impact of Inaccuracy in Estimating $\dot{w}(\lambda_o)$}
\label{subsec:inaccurate-w}

As the price-dependent demand model $w(\cdot)$ is unknown to the ISO, we derive an upper bound for the error in estimating $\dot{w}(\lambda_o)$, to ensure the stability of the algorithm in Proposition~\ref{PROP:ALGORITHM}. Let $\widetilde{\dot{w}(\lambda_o)}$ denote the estimated $\dot{w}(\lambda_o)$, and $E_w \!\! = \!\! \frac{\dot{w}(\lambda_o) \! - \! \widetilde{\dot{w}(\lambda_o)}}{\dot{w}(\lambda_o)}$ denote the relative estimation error. The stability condition $0 < \eta < 1$ can be rewritten as $0 < \frac{2\eta}{\dot{s}(\lambda_o)-\dot{w}(\lambda_o)} < \frac{2}{\dot{s}(\lambda_o) - \dot{w}(\lambda_o)}$. As long as $0 < \frac{2\eta}{\dot{s}(\lambda_o)-\widetilde{\dot{w}(\lambda_o)}} < \frac{2}{\dot{s}(\lambda_o) - \dot{w}(\lambda_o)}$, the system is stable. This condition can be derived as $E_w \! < \! (1 \! - \! \eta)\left(1 \! - \! \frac{\dot{s}(\lambda_o)}{\dot{w}(\lambda_o)}\right)$. As $1 \! - \! \frac{\dot{s}(\lambda_o)}{\dot{w}(\lambda_o)} \! > \! 1$, $E_w \! < \! (1 \! - \! \eta)$ is a sufficient condition for stability.


We now discuss the impact of inaccurate $\widetilde{\dot{w}(\lambda_o)}$ on the security analysis results in Section~\ref{sec:security}. From the definition of $E_w$, we have $\widetilde{\dot{w}(\lambda_o)} = ( 1 - E_w) \dot{w}(\lambda_o)$. Note that $E_w < 1$ since $\widetilde{\dot{w}(\lambda_o)} > 0$. By replacing $h$ with $|1 - E_w| \cdot h$ in Proposition~\ref{PROP:SCALING-ROS}, we have a new result in the presence of the estimation error $E_w$. From the proofs of Propositions~\ref{PROP:SCALING-ROAS}, \ref{PROP:ROS}, and \ref{PROP:HARD-DELAY-ATTACK}, they are independent of $h$. Therefore, these propositions still hold in the presence of estimation errors.

\subsection{Superimposed and Heterogeneous Attacks}
\label{subsec:superimpose}

In this section, we discuss how to extend our analysis framework to address a class of integrity attacks that are the superimposition of scaling and delay attacks.
We also discuss how to adapt our analysis to scenarios in which the attack models/parameters are different for different compromised consumers.


From discrete-time control theory \cite{ogata1995discrete}, our analysis framework can be applied to derive the $\mathrm{ROS}_{\lambda_o}$ and $\mathrm{ROS}$ under any integrity attack that can be modeled as a linear time-invariant (LTI) system with the transfer function $\frac{\Lambda'(z)}{\Lambda(z)} = \frac{\sum_{i=0}^{n}a_iz^{-i}}{\sum_{j=0}^{m}b_jz^{-j}}$, where the $\Lambda(z)$ and $\Lambda'(z)$ are the $z$-transforms of $\lambda_k$ and $\lambda_k'$. In the time domain, $\lambda_k'$ is given by the linear combination of $\lambda_{k - i}$ and $\lambda'_{k-j}$, where $0 \le i \le n$ and $1 \le j \le m$.
The scaling and delay attacks studied in Section~\ref{sec:security} are special cases of this general attack model. For instance, under the delay attack, $b_0=1$, $b_j=0$ for $j \ge 1$, $a_\tau = 1$, $a_i = 0$ for $i \neq \tau$. This general attack model can also be regarded as the superimposition of scaling and delay attacks.
We now illustrate the enhanced impact of attack superimposition using a simple example: $\lambda_k' = \gamma \cdot \lambda_{k-\tau}$. Under this attack superimposition,
the closed-loop system characteristic function of Eq.~(\ref{eq:demand-under-attack}) is
\begin{equation*}
P(z) = (h+1) z^{\tau + 1}  + (2\eta + 2 \eta (1-\rho)h - h -1) z^{\tau} + 2\eta \rho h \gamma \mu,
\end{equation*}
where $\mu$ is defined in Section~\ref{subsec:resilience-scaling}. We can still apply the Jury test to derive the $\mathrm{ROS}_{\lambda_o}(\rho, \gamma, \tau)$ and $\mathrm{ROS}(\rho, \gamma, \tau)$. Fig.~\ref{fig:stability-boundary1} shows the stability boundary for this attack superimposition with $\rho=1$, $\gamma=0.6$, and $\tau=1$. The $\mathrm{ROS}_{\lambda_o}$ of this attack superimposition is smaller than the delay attack with $\rho=1$ and $\tau=1$, which means stronger attack impact.

If two subsets of consumers are subject to two different attacks that happen simultaneously in the grid,
Eq.~(\ref{eq:demand-under-attack}) can be rewritten as $w'(\lambda_k)
= \rho_1w(\lambda_k') + \rho_2w(\lambda_k'') + (1 - \rho_1 - \rho_2)
w(\lambda_k)$, where $\rho_1$ and $\rho_2$ are the fractions of
consumers subject to the two attacks, and $\lambda_k'$ and
$\lambda_k''$ are the corresponding compromised prices. Our analysis framework still applies once the models of $\lambda_k'$ and $\lambda_k''$ are specified.
The attack with different parameters (e.g., consumers are subject to different delays) can be treated as simultaneous attacks.
\section{Trace-Driven Simulations}
\label{sec:sim}

We use GridLAB-D \cite{gridlabd}, an electric power distribution system simulator, to evaluate the impact of integrity attacks.
GridLAB-D captures many physical characteristics such as power line capacities and impedances.
Hence, we can validate our analysis under the realism provided by GridLAB-D. Moreover, it can record emergency events that occur when the current ratings of lines and power ratings of transformers are exceeded. Such events could cause sustained service interruptions to consumers. These events are of particular interest to us, because they help us understand the physical consequences caused by the integrity attacks.


\subsection{Simulation Methodology and Settings}

We use a distribution feeder specification \cite{gridlabd-spec},
which covers a moderately populated urban area and comprises 1405 houses, 2134 buses, 3314 triplex buses, 1944 transformers, 1543 overhead lines, 335 underground lines, and 1631 triplex lines. For this small-scale distribution feeder, locational prices are usually not applicable and hence all the houses are subject to the same price as discussed in Section~\ref{subsec:iso}. By leveraging the extensibility of GridLAB-D, we develop new modules that implement the CEO model for each single house, the price stabilization algorithm in Eq.~(\ref{eq:classical-control}), and the attack strategies. We measure the instantaneous power of the entire feeder at the root node. Its peak value over the previous pricing period is used as $d(\lambda_{k-1})$ in Eq.~(\ref{eq:classical-control}). As we focus on evaluating the physical consequences of attacks, we do not simulate the logistics of the attacks and assume that the adversary can gain access to the meters of his choosing. Specifically, if a house is not subject to attacks, it directly reads the real-time price from the {\em ISO module}; otherwise, it reads the price from an {\em adversary module} that modifies the price according to the attack models.
All the attacks are launched after the system has converged.

We adopt the CEO demand model for each single house,
where the parameters are drawn from normal distributions: $D_i \sim \mathcal{N}(7, 3.5^2)$ (unit: kW) and $\epsilon_i \sim \mathcal{N}(-0.8, 0.1^2)$. Under this setting, if the price is within $[10, 20]$, the per-house price-responsive demand is within $[0.65, 1.1]$ kW. To improve the realism of the simulations, we use the half-hourly total demand trace from March 1st to 22nd, 2013, of NSW, Australia, provided by AEMO \cite{AEMO} as the baseline load. The baseline load of a single house is set to be a scaled version of the real data trace. The resultant range of the per-house baseline load is $[0.276, 0.488]$ kW. Hence, when the price is within $[10, 20]$, the demand of a household is within $[0.9, 1.6]$ kW, which is consistent with the average demand of a household in reality \cite{household-use2}. In our simulations, the price is updated every half an hour, to be consistent with the setting of the demand data traces \cite{AEMO}. In each pricing period, the simulated demand remains constant. For the supply model, the settings obtained in Fig.~\ref{fig:supply-price} (i.e., $p=152$ and $q=4503$) are for the whole NSW region. They must be scaled down to fit the simulated feeder with 1405 houses. From the facts that there are 2.8 million households in NSW \cite{household-use2} and 57\% of AEMO's supply is for residential demand \cite[p.~15]{aemo-forecasting}, the two parameters are scaled as follows: $p=\frac{57\% \times 152}{2800000/1405}= 43.638 \times 10^{-3}$ and $q = \frac{57\% \times 4503}{2800000/1405} = 1.287$. Other default settings include: $T=0.5$, $\lambda_{\min} = 1$, $\lambda_{\max} = 200$, and $\eta=0.5$.

\subsection{Simulation Results}

\noindent {\bf Price stabilization:}
The first set of simulations evaluate the effectiveness of the direct feedback approach \cite{roozbehani2011volatility} and our control-theoretic price stabilization algorithm in Eq.~(\ref{eq:classical-control}). In the simulations, the direct feedback approach is unstable, where the price oscillates between $\lambda_{\min}$ and $\lambda_{\max}$. The total demand reaches 10 MW a few hours after the start of the simulation, and GridLAB-D reports that four distribution lines are overloaded.
Fig.~\ref{fig:sim-ours} plots the price and resultant demands under our price stabilization approach in Eq.~(\ref{eq:classical-control}). We can see that the price fluctuates slightly for a few hours after the start of the simulation, due to an inappropriate initial price. After the system converges, it can well adapt to the time-varying baseline load. The generation scheduling error is close to zero, which means that the clearing price is achieved. We also evaluate the impact of an inaccurate consumer model $w(\lambda) = \widetilde{D} \lambda^{\widetilde{\epsilon}}$ on the system performance. The evaluation results are consistent with our analysis in Section~\ref{subsec:inaccurate-w}. For instance, when $\eta=0.5$, the system keeps stable if i) $(D - \widetilde{D}) / D < 0.67$ and $\widetilde{\epsilon} = \epsilon$, or ii) $0.05 \epsilon < \widetilde{\epsilon} < 2.15 \epsilon$ and $\widetilde{D} = D$. Therefore, the price stabilization algorithm is resilient to inaccuracies of the estimated consumer model.

\begin{figure}
  \centering
    \includegraphics{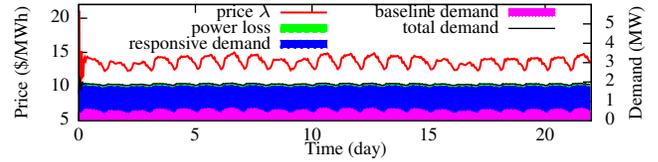}
    \vspace{-0.3in}
    \caption{Price stabilization in the absence of attack.}
  \label{fig:sim-ours}
\end{figure}

\begin{figure}
  \centering
  \includegraphics{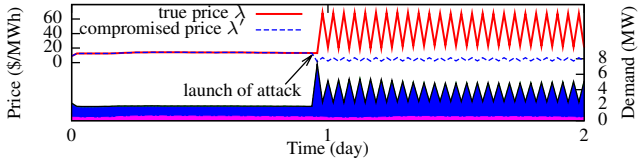}
    \vspace{-0.3in}
    \caption{Scaling attack ($\rho=65\%$, $\gamma=0.1$).}
    \label{fig:sim-scaling}
\end{figure}

\vspace{0.5em}
\noindent {\bf Scaling attack:}
Fig.~\ref{fig:sim-scaling} plots the true and compromised prices, as well as the breakdown of demand under the scaling attack. We can see that the price as well as the demand fluctuates severely. GridLAB-D reports excessive distribution line overload events after the launch of the attack. We also extensively evaluate the impact of the scaling attack with different settings of $\rho$ and $\gamma$. We use the standard deviation of the generation scheduling error after the launch of the attack, denoted by $\sigma(e)$, as the system volatility metric. A near-zero $\sigma(e)$ means convergence, while a considerably large $\sigma(e)$ means oscillation or divergence. Fig.~\ref{fig:scale_err_gamma} plots $\sigma(e)$ versus $\gamma$ under various settings of $\rho$. We can see that the system volatility increases with $\rho$ and decreases with $\gamma$.

\begin{figure}
  \centering
  \includegraphics{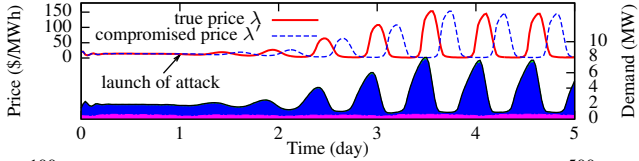}
  \includegraphics{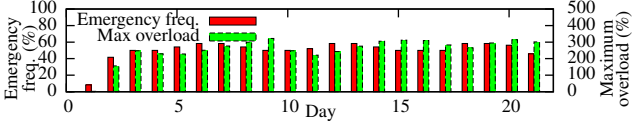}
  \vspace{-0.1in}
  \caption{Impact of delay attack ($\rho=100\%$, $\tau=9$). (a) Prices and demands; (b) emergencies.}
  \label{fig:sim-delay1}
\end{figure}

\vspace{.5em}
\noindent {\bf Delay attack:}
Fig.~\ref{fig:sim-delay1}(a) and Fig.~\ref{fig:sim-delay2}(a) show the evolution of price and the breakdown of demand under the delay attacks with different parameters. We also investigate the emergency events reported by GridLAB-D. The {\em overload} of a distribution line or a transformer is defined as the percentage of the exceeded current/power with respect to the rated value. Fig.~\ref{fig:sim-delay1}(b) and Fig.~\ref{fig:sim-delay2}(b) plot the {\em emergency frequency} and maximum overload in each day. The emergency frequency is defined as the ratio of the number of pricing periods with reported emergency events to the number of pricing periods per day (i.e., 48). In Fig.~\ref{fig:sim-delay1}, a small generation scheduling error caused by the time-varying baseline load will be amplified iteratively along the control loops, after the launch of the  attack. The overload can be up to 350\%. In practice, such a high overload will cause circuit breakers to open and hence regional blackouts. In Fig.~\ref{fig:sim-delay2}, the system appears to diverge and then converge again without causing any emergencies. However, it diverges again from the 12th day due to the changing baseline load, causing excessive emergencies. This illustrates the stealthiness of the delay attack that causes marginal system stability.
Lastly, we evaluate the impact of the delay attack with different settings of $\rho$ and $\tau$. The results are shown in Fig.~\ref{fig:delay_err_rho}. We can clearly see that when $\rho < 0.5$, the system remains stable, which is consistent with Proposition~\ref{PROP:HARD-DELAY-ATTACK}.


\begin{figure}
  \centering
  \includegraphics{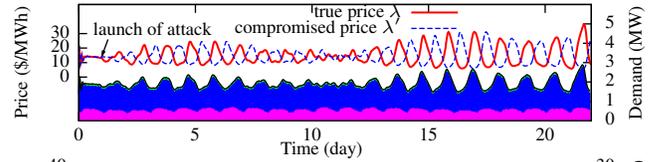}
  \includegraphics{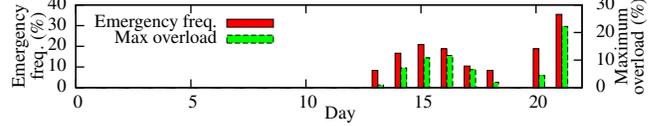}
  \vspace{-0.2in}
  \caption{Impact of delay attack ($\rho=65\%$, $\tau=24$). (a) Prices and demands; (b) emergencies.}
  \label{fig:sim-delay2}
  \vspace{-0.1in}
\end{figure}

\begin{figure}
  \centering
  \subfigure[Scaling attack]{
    \includegraphics{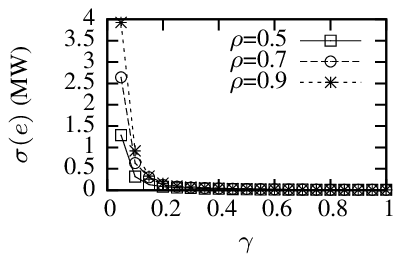}
    \label{fig:scale_err_gamma}
  }
  \subfigure[Delay attack]{
    \includegraphics{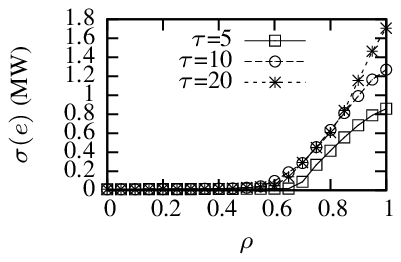}
    \label{fig:delay_err_rho}
  }
  \vspace{-0.1in}
  \caption{System volatility under attacks.}
  \vspace{-0.1in}
\end{figure}

\section{Conclusion}
\label{sec:conclude}

This paper systematically investigates the impact of scaling and delay attacks on the stability of RTP systems. We characterize the impact using a control-theoretic metric, {\em region of stability}.
We show that, to destabilize the RTP system, it is necessary for the adversary to reduce the price in a scaling attack or compromise more than half of the smart meters in a delay attack. We conduct trace-driven simulations to validate our analysis. The results of this paper improve our understanding of the security of RTP systems so that suitable defensive measures can be taken.

\subsection{Limitations and Future Work}


In this paper, we have made a few simplifying assumptions and have pointed them out. Care is thus required when applying our results to general real-world settings, e.g., large-scale smart grids, where locational prices are important. Particularly, Propositions~\ref{PROP:ALGORITHM} to \ref{PROP:HARD-DELAY-ATTACK} are based on the linearized abstract supply and demand models at a fixed operating point.\footnote{The numerical results and simulations use the algorithm in Eq.~(\ref{eq:classical-control}), which uses the current price as the operating point.} The security analysis in Propositions~\ref{PROP:SCALING-ROS} to \ref{PROP:HARD-DELAY-ATTACK} is based on the approximate aggregate demand model in Eq.~(\ref{eq:demand-under-attack}). These simplifications enable us to study the RTP susceptibility to malicious attacks under LTI settings. However, for systems that involve non-linear, time-variant, or stochastic components and complex market structures, the LTI assumptions may not hold and our analysis may lead to inaccuracies in characterizing such systems.

It is interesting for further research to address the following issues not considered in this paper. First, our analysis framework can be extended to address attacks that cannot be modeled as LTI systems. Second, for large-scale systems where locational prices are important, the system behavior may deviate from our analysis when the power network is congested due to price oscillations. Locational prices can be considered by integrating economic dispatch into the analysis. Third, this paper considers simplified system models while preserving the principles of RTP. Extensions are possible that consider various practical factors such as suppliers' ramp constraints, energy storage, load shifting, bidding markets, and ex-ante RTP with ex-post adjustments.

\section{Acknowledgements}

We thank our shepherd Dr.~Alvaro~A.~C\'{a}rdenas and the anonymous reviewers for providing valuable feedback on this work. This work was supported in part by Singapore's Agency for Science, Technology and Research (A$\star$STAR) under the Human Sixth Sense Programme, and in part by U.S. National Science Foundation under grant numbers CNS-0963715 and CNS-0964086.


\bibliographystyle{abbrv}
\bibliography{references}

\begin{thebibliography}{10}

\bibitem{AEMO}
Austrilian energy market operator.
\newblock \url{www.aemo.com.au}.

\bibitem{gridlabd}
{GridLAB-D}.
\newblock \url{www.gridlabd.org}.

\bibitem{household-use2}
How energy is used in the home.
\newblock \url{www.bit.ly/h9YHzV}.

\bibitem{vulnerable-ntp}
Multiple vulnerabilities in {NTP}.
\newblock \url{www.bit.ly/GzDJ5a}.

\bibitem{pjm-price}
{PJM} operational data.
\newblock \url{www.bit.ly/diyZ9t}.

\bibitem{rtp-ha-2}
{RTP-HA-2} program.
\newblock \url{www.georgiapower.com}.

\bibitem{761873}
F.~Alvarado.
\newblock The stability of power system markets.
\newblock {\em IEEE Trans. Power Syst.}, 14(2):505--511, 1999.

\bibitem{6303885}
S.~Amin, X.~Litrico, S.~Sastry, and A.~M. Bayen.
\newblock Cyber security of water {SCADA} systems-{Part I}: Analysis and
  experimentation of stealthy deception attacks.
\newblock {\em IEEE Trans. Control Syst. Technol.}, 2012.

\bibitem{aemo-forecasting}
{Australian Energy market operator}.
\newblock 2011 national electricity forecasting.
\newblock \url{www.bit.ly/ZQzkOs}.

\bibitem{barbose2005real}
G.~Barbose, C.~Goldman, R.~Bharvirkar, N.~Hopper, M.~Ting, and B.~Neenan.
\newblock Real time pricing as a default or optional service for {C\&I}
  customers: A comparative analysis of eight case studies.
\newblock Technical report, Lawrence Berkeley National Laborary, 2005.

\bibitem{cardenas2011attacks}
A.~A. C{\'a}rdenas, S.~Amin, Z.-S. Lin, Y.-L. Huang, C.-Y. Huang, and
  S.~Sastry.
\newblock Attacks against process control systems: risk assessment, detection,
  and response.
\newblock In {\em ASIACCS}, 2011.

\bibitem{cardenas2008secure}
A.~A. C{\'a}rdenas, S.~Amin, and S.~Sastry.
\newblock Secure control: Towards survivable cyber-physical systems.
\newblock In {\em 1st Intl. Workshop on Cyber-Physical Systems}, 2008.

\bibitem{filippini2011short}
M.~Filippini.
\newblock Short-and long-run time-of-use price elasticities in {Swiss}
  residential electricity demand.
\newblock {\em Energy Policy}, 39(10):5811--5817, 2011.

\bibitem{fleten2005constructing}
S.~Fleten and E.~Pettersen.
\newblock Constructing bidding curves for a price-taking retailer in the
  norwegian electricity market.
\newblock {\em IEEE Trans. Power Syst.}, 20(2):701--708, 2005.

\bibitem{grainger1994power}
J.~Grainger and W.~Stevenson.
\newblock {\em Power system analysis}.
\newblock McGraw-Hill, 1994.

\bibitem{jia2012impacts}
L.~Jia, R.~J. Thomas, and L.~Tong.
\newblock Impacts of malicious data on real-time price of electricity market
  operations.
\newblock In {\em Intl. Conf. System Sciences}, 2012.

\bibitem{kosut2011malicious}
O.~Kosut, L.~Jia, R.~J. Thomas, and L.~Tong.
\newblock Malicious data attacks on the smart grid.
\newblock {\em IEEE Trans. Smart Grid}, 2(4):645--658, 2011.

\bibitem{kundur1994power}
P.~Kundur.
\newblock {\em Power system stability and control}.
\newblock Tata McGraw-Hill Education, 1994.

\bibitem{lijesen2007real}
M.~G. Lijesen.
\newblock The real-time price elasticity of electricity.
\newblock {\em Energy economics}, 29(2):249--258, 2007.

\bibitem{lin2012false}
J.~Lin, W.~Yu, X.~Yang, G.~Xu, and W.~Zhao.
\newblock On false data injection attacks against distributed energy routing in
  smart grid.
\newblock In {\em ICCPS}, 2012.

\bibitem{liu2009injection}
Y.~Liu, P.~Ning, and M.~K. Reiter.
\newblock False data injection attacks against state estimation in electric
  power grids.
\newblock In {\em CCS}, 2009.

\bibitem{mclaughlin2010multi}
S.~McLaughlin, D.~Podkuiko, S.~Miadzvezhanka, A.~Delozier, and P.~McDaniel.
\newblock Multi-vendor penetration testing in the advanced metering
  infrastructure.
\newblock In {\em ACSAC}, 2010.

\bibitem{nighswander2012gps}
T.~Nighswander, B.~Ledvina, J.~Diamond, R.~Brumley, and D.~Brumley.
\newblock {GPS} software attacks.
\newblock In {\em CCS}, 2012.

\bibitem{ogata1995discrete}
K.~Ogata.
\newblock {\em Discrete-time control systems}.
\newblock Prentice-Hall Englewood Cliffs, NJ, 2 edition, 1995.

\bibitem{roozbehani2011volatility}
M.~Roozbehani, M.~Dahleh, and S.~Mitter.
\newblock Volatility of power grids under real-time pricing.
\newblock {\em IEEE Trans. Power Syst.}, 27(4):1926--1940, 2012.

\bibitem{rouf2012neighborhood}
I.~Rouf, H.~Mustafa, M.~Xu, W.~Xu, R.~Miller, and M.~Gruteser.
\newblock Neighborhood watch: security and privacy analysis of automatic meter
  reading systems.
\newblock In {\em CCS}, 2012.

\bibitem{gridlabd-spec}
K.~P. Schneider, Y.~Chen, D.~P. Chassin, R.~Pratt, D.~Engel, and S.~Thompson.
\newblock Modern grid initiative distribution taxonomy final report.
\newblock Technical report, Pacific Northwest National Laborary, 2008.

\bibitem{smart-meter-gps2}
{Schneider Electric}.
\newblock Technical note of {ION} smart meter.
\newblock \url{www.bit.ly/15W9GR4}.

\bibitem{sweeny2002california}
J.~Sweeny.
\newblock {\em The California electricity crisis}.
\newblock Hoover Institution Press, 2002.

\bibitem{thistechreport}
R.~Tan, V.~{Badrinath Krishna}, D.~K.~Y. Yau, and Z.~Kalbarczyk.
\newblock Impact of integrity attacks on real-time pricing in smart grids.
\newblock Technical report, Advanced Digital Sciences Center, 2013.
\newblock
  \url{http://publish.illinois.edu/cps-security/files/2013/08/stability.pdf}.

\bibitem{grid-threat}
{The Wall Street Journal}.
\newblock Spies penetrate {U.S.} electric grid.
\newblock \url{http://on.wsj.com/fhGdG}.

\bibitem{xie2011integrity}
L.~Xie, Y.~Mo, and B.~Sinopoli.
\newblock Integrity data attacks in power market operations.
\newblock {\em IEEE Trans. Smart Grid}, 2(4):659--666, 2011.

\bibitem{yuan2011modeling}
Y.~Yuan, Z.~Li, and K.~Ren.
\newblock Modeling load redistribution attacks in power systems.
\newblock {\em IEEE Trans. Smart Grid}, 2(2):382--390, 2011.

\end{thebibliography}


@inproceedings{lu2010smart,
  title={The smart thermostat: using occupancy sensors to save energy in homes},
  author={Lu, J. and Sookoor, T. and Srinivasan, V. and Gao, G. and Holben, B. and Stankovic, J. and Field, E. and Whitehouse, K.},
  booktitle={Proceedings of the 8th ACM Conference on Embedded Networked Sensor Systems (SenSys)},
  pages={211--224},
  year={2010},
  organization={ACM}
}

@inproceedings{jung2010estimating,
  title={Estimating building consumption breakdowns using on/off state sensing and incremental sub-meter deployment},
  author={Jung, D. and Savvides, A.},
  booktitle={Proceedings of the 8th ACM Conference on Embedded Networked Sensor Systems (SenSys)},
  pages={225--238},
  year={2010},
  organization={ACM}
}

@inproceedings{liang2009racnet,
  title={RACNet: a high-fidelity data center sensing network},
  author={Liang, C.J.M. and Liu, J. and Luo, L. and Terzis, A. and Zhao, F.},
  booktitle={Proceedings of the 7th ACM Conference on Embedded Networked Sensor Systems (SenSys)},
  pages={15--28},
  year={2009},
  organization={ACM}
}

@inproceedings{chen12rtss,
  title={A High-Fidelity Temperature Distribution Forecasting System for Data Centers},
  author={Jinzhu Chen and Rui Tan and Yu Wang and Guoliang Xing and Xiaorui Wang and Xiaodong Wang and Bill Punch and Dirk Colbry},
  booktitle={Proceedings of the 33rd IEEE Real-Time Systems Symposium (RTSS)},
  year={2012}
}

@article{fok2009agilla,
  title={Agilla: A mobile agent middleware for self-adaptive wireless sensor networks},
  author={Fok, C.L. and Roman, G.C. and Lu, C.},
  journal={ACM Transactions on Autonomous and Adaptive Systems (TAAS)},
  volume={4},
  number={3},
  pages={16},
  year={2009},
  publisher={ACM}
}

@inproceedings{tan2010quality,
  title={Quality-driven volcanic earthquake detection using wireless sensor networks},
  author={Tan, R. and Xing, G. and Chen, J. and Song, W.Z. and Huang, R.},
  booktitle={Real-Time Systems Symposium (RTSS), 2010 IEEE 31st},
  pages={271--280},
  year={2010},
  organization={IEEE}
}

@inproceedings{faulkner2011next,
  title={The next big one: Detecting earthquakes and other rare events from community-based sensors},
  author={Faulkner, M. and Olson, M. and Chandy, R. and Krause, J. and Chandy, K.M. and Krause, A.},
  booktitle={Information Processing in Sensor Networks (IPSN), 2011 10th International Conference on},
  pages={13--24},
  year={2011},
  organization={IEEE}
}

@inproceedings{wang2012accuracy,
  title={Accuracy-aware aquatic diffusion process profiling using robotic sensor networks},
  author={Wang, Y. and Tan, R. and Xing, G. and Wang, J. and Tan, X.},
  booktitle={Proceedings of the 11th international conference on Information Processing in Sensor Networks (IPSN)},
  pages={281--292},
  year={2012},
  organization={ACM}
}

@inproceedings{wang12rtss,
  title={Spatiotemporal Aquatic Field Reconstruction using Robotic Sensor Swarm},
  author={Yu Wang and Rui Tan and Guoliang Xing and Xiaobo Tan and Jianxun Wang and Ruogu Zhou},
  booktitle={Proceedings of the 33rd IEEE Real-Time Systems Symposium (RTSS)},
  year={2012}
}

@inproceedings{wu2012situ,
  title={In-situ soil moisture sensing: measurement scheduling and estimation using compressive sensing},
  author={Wu, X. and Liu, M.},
  booktitle={Proceedings of the 11th international conference on Information Processing in Sensor Networks (IPSN)},
  pages={1--12},
  year={2012},
  organization={ACM}
}

@inproceedings{mo2009canopy,
  title={Canopy closure estimates with GreenOrbs: sustainable sensing in the forest},
  author={Mo, L. and He, Y. and Liu, Y. and Zhao, J. and Tang, S.J. and Li, X.Y. and Dai, G.},
  booktitle={Proceedings of the 7th ACM Conference on Embedded Networked Sensor Systems},
  pages={99--112},
  year={2009},
  organization={ACM}
}

@article{lu2012eppa,
  title={Eppa: An efficient and privacy-preserving aggregation scheme for secure smart grid communications},
  author={Lu, R. and Liang, X. and Li, X. and Lin, X. and Shen, X.},
  journal={Parallel and Distributed Systems, IEEE Transactions on},
  volume={23},
  number={9},
  pages={1621--1631},
  year={2012},
  publisher={IEEE}
}

@inproceedings{vasudevan2010security,
  title={Security-capacity trade-off in large wireless networks using keyless secrecy},
  author={Vasudevan, S. and Goeckel, D. and Towsley, D.F.},
  booktitle={Proceedings of the eleventh ACM international symposium on Mobile ad hoc networking and computing},
  pages={21--30},
  year={2010},
  organization={ACM}
}

@inproceedings{keung2010intrusion,
  title={The intrusion detection in mobile sensor network},
  author={Keung, Y. and Li, B. and Zhang, Q.},
  booktitle={Proceedings of the eleventh ACM international symposium on Mobile ad hoc networking and computing},
  pages={11--20},
  year={2010},
  organization={ACM}
}

@article{sun2006information,
  title={An information systems security risk assessment model under the Dempster-Shafer theory of belief functions},
  author={Sun, L. and Srivastava, R.P. and Mock, T.J.},
  journal={Journal of Management Information Systems},
  volume={22},
  number={4},
  pages={109--142},
  year={2006},
  publisher={ME Sharpe}
}

@inproceedings{mclaughlin2010energy,
  title={Energy theft in the advanced metering infrastructure},
  author={McLaughlin, S. and Podkuiko, D. and McDaniel, P.},
  booktitle={International Workshop on Critical Information Infrastructures Security},
  year={2009}
}

@inproceedings{mclaughlin2010multi,
  title={Multi-vendor penetration testing in the advanced metering infrastructure},
  author={McLaughlin, S. and Podkuiko, D. and Miadzvezhanka, S. and Delozier, A. and McDaniel, P.},
  booktitle={ACSAC},
  year={2010}
}

@inproceedings{garg2011seeing,
  title={Seeing ENF: natural time stamp for digital video via optical sensing and signal processing},
  author={Garg, R. and Varna, A.L. and Wu, M.},
  booktitle={Proceedings of the 19th ACM international conference on Multimedia},
  pages={23--32},
  year={2011},
  organization={ACM}
}

@article{li2012flight,
  title={Flight: Clock calibration using fluorescent lighting},
  author={Li, Z. and Chen, W. and Li, C. and Li, M. and Li, X. and Liu, Y.},
  journal={phProc. of ACM Mobicom},
  year={2012}
}

@inproceedings{rowe2009low,
  title={Low-power clock synchronization using electromagnetic energy radiating from ac power lines},
  author={Rowe, A. and Gupta, V. and Rajkumar, R.R.},
  booktitle={Proceedings of the 7th ACM Conference on Embedded Networked Sensor Systems},
  pages={211--224},
  year={2009},
  organization={ACM}
}

@inproceedings{enev2011televisions,
  title={Televisions, video privacy, and powerline electromagnetic interference},
  author={Enev, M. and Gupta, S. and Kohno, T. and Patel, S.N.},
  booktitle={Proceedings of the 18th ACM conference on Computer and communications security},
  pages={537--550},
  year={2011},
  organization={ACM}
}

@article{zadeh1986simple,
  title={A simple view of the Dempster-Shafer theory of evidence and its implication for the rule of combination},
  author={Zadeh, L.A.},
  journal={AI magazine},
  volume={7},
  number={2},
  pages={85},
  year={1986}
}

@inproceedings{kim2009viridiscope,
  title={ViridiScope: design and implementation of a fine grained power monitoring system for homes},
  author={Kim, Y. and Schmid, T. and Charbiwala, Z.M. and Srivastava, M.B.},
  booktitle={Proceedings of the 11th international conference on Ubiquitous computing},
  pages={245--254},
  year={2009},
  organization={ACM}
}

@article{fang2011smart,
  title={Smart Grid -- The New and Improved Power Grid: A Survey},
  author={Fang, X. and Misra, S. and Xue, G. and Yang, D.},
  year={2012},
  journal={IEEE Commun. Surveys Tuts.},
  volume={14},
  number={4}
}

@inproceedings{roozbehani2010dynamic,
  title={Dynamic pricing and stabilization of supply and demand in modern electric power grids},
  author={Roozbehani, M. and Dahleh, M. and Mitter, S.},
  booktitle={Smart Grid Communications (SmartGridComm), 2010 First IEEE International Conference on},
  pages={543--548},
  year={2010},
  organization={IEEE}
}

@article{gungor2010opportunities,
  title={Opportunities and challenges of wireless sensor networks in smart grid},
  author={Gungor, V.C. and Lu, B. and Hancke, G.P.},
  journal={Industrial Electronics, IEEE Transactions on},
  volume={57},
  number={10},
  pages={3557--3564},
  year={2010},
  publisher={IEEE}
}

@article{leon2007application,
  title={Application of sensor network for secure electric energy infrastructure},
  author={Le{\'o}n, R.A. and Vittal, V. and Manimaran, G.},
  journal={Power Delivery, IEEE Transactions on},
  volume={22},
  number={2},
  pages={1021--1028},
  year={2007},
  publisher={IEEE}
}

@article{roozbehani2011volatility,
  title={Volatility of power grids under real-time pricing},
  author={Roozbehani, M. and Dahleh, M.A. and Mitter, S.K.},
  journal={IEEE Trans. Power Syst.},
  year={2012},
  volume={27},
  number={4},
  pages={1926-1940},
}

@techreport{roozbehani2012load,
  title={Load-Shifting under Perfect and Partial Information: Models, Robust Policies, and Economic Value},
  author={Roozbehani, M. and Ohannessian, M. and Materassi, D. and Dahleh, M.A.},
  year={2012},
  institution={MIT},
}

@inproceedings{kiani2011wholesale,
  title={Wholesale energy market in a smart grid: Dynamic modeling and stability},
  author={Kiani, A. and Annaswamy, A.},
  booktitle={IEEE Conference on Decision and Control and European Control Conference},
  year={2011}
}

@article{hetzer2008economic,
  title={An economic dispatch model incorporating wind power},
  author={Hetzer, J. and Yu, DC and Bhattarai, K.},
  journal={IEEE Trans. Energy Convers.},
  volume={23},
  number={2},
  pages={603--611},
  year={2008},
  publisher={IEEE}
}

@article{subramaniansimplified,
  title={A Simplified Approach for Economic Dispatch with Piecewise Quadratic Cost Functions},
  author={Subramanian, S. and Ganesan, S.},
  journal={International Journal of Computer and Electrical Engineering},
  volume={2},
  number={5},
  year={2010},
}

@article{lin1984hierarchical,
  title={Hierarchical economic dispatch for piecewise quadratic cost functions},
  author={Lin, CE and Viviani, GL},
  journal={IEEE Trans. Power App. Syst.},
  number={6},
  pages={1170--1175},
  year={1984},
  publisher={IEEE}
}

@article{yoshikawa1997line,
  title={On-line economic load dispatch based on fuel cost dynamics},
  author={Yoshikawa, M. and Toshida, N. and Nakajima, H. and Harada, Y. and Tsurugai, M. and Nakata, Y.},
  journal={IEEE Trans. Power Syst.},
  volume={12},
  number={1},
  pages={315--320},
  year={1997},
  publisher={IEEE}
}

@article{huang2009understanding,
  title={Understanding the physical and economic consequences of attacks on control systems},
  author={Huang, Y.L. and C{\'a}rdenas, A.A. and Amin, S. and Lin, Z.S. and Tsai, H.Y. and Sastry, S.},
  journal={International Journal of Critical Infrastructure Protection},
  volume={2},
  number={3},
  pages={73--83},
  year={2009},
  publisher={Elsevier}
}

@inproceedings{liu2009injection,
 author = {Liu, Yao and Ning, Peng and Reiter, Michael K.},
 title = {False data injection attacks against state estimation in electric power grids},
 booktitle = {CCS},
 year = {2009},
} 

@techreport{barbose2005real,
  title={Real Time Pricing as a Default or Optional Service for {C\&I} Customers: A Comparative Analysis of Eight Case Studies},
  author={Barbose, G. and Goldman, C. and Bharvirkar, R. and Hopper, N. and Ting, M. and Neenan, B.},
  institution={Lawrence Berkeley National Laborary},
  year={2005}
}

@misc{rtp-ha-2,
  title={{RTP-HA-2} Program},
  note={\url{www.georgiapower.com}}
}

@misc{pjm,
  title={PJM Interconnection LLC},
  note={\url{http://www.pjm.com}}
}

@misc{pjm-price,
  title={{PJM} operational data},
  note={\url{www.bit.ly/diyZ9t}}
}

@book{sweeny2002california,
  title={The California electricity crisis},
  author={Sweeny, J.L.},
  year={2002},
  publisher={Hoover Institution Press}
}

@misc{AEMO,
  title={Austrilian Energy Market Operator},
  note={\url{www.aemo.com.au}}
}

@article{fleten2005constructing,
  title={Constructing bidding curves for a price-taking retailer in the Norwegian electricity market},
  author={Fleten, S.E. and Pettersen, E.},
  journal={IEEE Trans. Power Syst.},
  volume={20},
  number={2},
  pages={701--708},
  year={2005},
  publisher={IEEE}
}

@book{ogata1995discrete,
  title={Discrete-time control systems},
  author={Ogata, K.},
  edition={2},
  year={1995},
  publisher={Prentice-Hall Englewood Cliffs, NJ}
}

@misc{avago,
  title={Avago Smart Meters},
  note={\url{http://www.avagotech.com/pages/en/industrial/smart_metersl}}
}

@inproceedings{nighswander2012gps,
  title={{GPS} Software Attacks},
  author={Nighswander, T. and Ledvina, B. and Diamond, J. and Brumley, R. and Brumley, D.},
  booktitle={CCS},
  year={2012}
}

@inproceedings{tool-isgt13,
  title={Tool for Determining Price Elasticity of Electricity Demand and Designing Dynamic Price Program},
  author={Dan Ton and Mustafa A. Biviji and Eva Nagypal and Jianhui Wang},
  booktitle={IEEE PES Innovative Smart Grid Technologies Conference},
  year={2013}
}

@article{fan2011price,
  title={The price elasticity of electricity demand in South Australia},
  author={Fan, Shu and Hyndman, Rob J},
  journal={Energy Policy},
  volume={39},
  number={6},
  pages={3709--3719},
  year={2011},
  publisher={Elsevier}
}

@article{filippini2011short,
  title={Short-and long-run time-of-use price elasticities in {Swiss} residential electricity demand},
  author={Filippini, Massimo},
  journal={Energy Policy},
  volume={39},
  number={10},
  pages={5811--5817},
  year={2011},
  publisher={Elsevier}
}

@article{lijesen2007real,
  title={The real-time price elasticity of electricity},
  author={Lijesen, Mark G},
  journal={Energy economics},
  volume={29},
  number={2},
  pages={249--258},
  year={2007},
  publisher={Elsevier}
}

@article{abdelzaher2008introduction,
  title={Introduction to control theory and its application to computing systems},
  author={Abdelzaher, Tarek and Diao, Yixin and Hellerstein, Joseph L and Lu, Chenyang and Zhu, Xiaoyun},
  journal={Performance Modeling and Engineering},
  pages={185--215},
  year={2008},
  publisher={Springer}
}

@misc{vulnerable-ntp,
  title={Multiple vulnerabilities in {NTP}},
  note={\url{www.bit.ly/GzDJ5a}},
}

@misc{smart-meter-gps,
  title={Avago Technologies Solutions for Smart Meters},
  note={\url{http://bit.ly/15o2tc6}}
}

@misc{smart-meter-gps2,
  author={{Schneider Electric}},
  title={Technical note of {ION} smart meter},
  note={\url{www.bit.ly/15W9GR4}}
}

@misc{gridlabd,
  title={{GridLAB-D}},
  note={\url{www.gridlabd.org}}
}

@techreport{gridlabd-spec,
  title={Modern Grid Initiative Distribution Taxonomy Final Report},
  author={Kevin P. Schneider and Yousu Chen and David P. Chassin and Robert Pratt and Dave Engel and Sandra Thompson},
  institution={Pacific Northwest National Laborary},
  year={2008}
}

@misc{new-south-wales,
  title={New South Wales},
  note={\url{http://en.wikipedia.org/wiki/New_South_Wales}}
}

@misc{household-use,
  title={Household Energy Use and Costs},
  note={\url{http://bit.ly/Sat7cU}}
}

@misc{household-size,
  title={Households and Families},
  note={\url{http://bit.ly/XtjzjK}},
}

@misc{household-use2,
  title={How energy is used in the home},
  note={\url{www.bit.ly/h9YHzV}},
}

@misc{au-elec-price,
  note={\url{http://bit.ly/ZOjzr8}}
}

@misc{nsw-households,
  note={\url{http://bit.ly/12KmwAS}}
}

@misc{aemo-forecasting,
  author={{Australian Energy market operator}},
  title={2011 National electricity forecasting},
  note={\url{www.bit.ly/ZQzkOs}}
}

@INPROCEEDINGS{gridlabd-cvr, 
author={Schneider, K.P. and Fuller, J.C. and Chassin, D.}, 
booktitle={IEEE Power and Energy Society General Meeting},
title={Evaluating conservation voltage reduction: An application of {GridLAB-D}}, 
year={2011},
}

@article{roubal2010linearization,
  title={Linearization: Students forget the operating point},
  author={Roubal, Jirka and Husek, Petr and Stecha, Jan},
  journal={IEEE Trans. Educ.},
  volume={53},
  number={3},
  pages={413--418},
  year={2010},
}

@inproceedings{anderson2010controls,
  title={Who controls the off switch?},
  author={Anderson, Ross and Fuloria, Shailendra},
  booktitle={SmartGridComm},
  year={2010}
}

@misc{grid-threat,
  author={{The Wall Street Journal}},
  title={Spies Penetrate {U.S.} Electric Grid},
  note={\url{http://on.wsj.com/fhGdG}}
}

@inproceedings{rouf2012neighborhood,
  title={Neighborhood watch: security and privacy analysis of automatic meter reading systems},
  author={Rouf, Ishtiaq and Mustafa, Hossen and Xu, Miao and Xu, Wenyuan and Miller, Rob and Gruteser, Marco},
  booktitle={CCS},
  year={2012},
}

@inproceedings{yang2012minimizing,
  title={Minimizing private data disclosures in the smart grid},
  author={Yang, Weining and Li, Ninghui and Qi, Yuan and Qardaji, Wahbeh and McLaughlin, Stephen and McDaniel, Patrick},
  booktitle={CCS},
  year={2012},
}

@inproceedings{mclaughlin2011protecting,
  title={Protecting consumer privacy from electric load monitoring},
  author={McLaughlin, Stephen and McDaniel, Patrick and Aiello, William},
  booktitle={CCS},
  year={2011}
}

@inproceedings{cardenas2011attacks,
  title={Attacks against process control systems: risk assessment, detection, and response},
  author={C{\'a}rdenas, Alvaro A and Amin, Saurabh and Lin, Zong-Syun and Huang, Yu-Lun and Huang, Chi-Yen and Sastry, Shankar},
  booktitle={ASIACCS},
  year={2011}
}

@inproceedings{cardenas2008secure,
  title={Secure control: Towards survivable cyber-physical systems},
  author={C{\'a}rdenas, Alvaro A and Amin, Saurabh and Sastry, Shankar},
  booktitle={1st Intl. Workshop on Cyber-Physical Systems},
  year={2008},
}

@article{6303885,
author={Amin, S. and Litrico, X. and Sastry, S. and Bayen, A. M.}, 
journal={IEEE Trans. Control Syst. Technol.},
title={Cyber Security of Water {SCADA} Systems-{Part I}: Analysis and Experimentation of Stealthy Deception Attacks}, 
year={2012},
}

@inproceedings{lin2012false,
  title={On false data injection attacks against distributed energy routing in smart grid},
  author={Lin, Jie and Yu, Wei and Yang, Xinyu and Xu, Guobin and Zhao, Wei},
  booktitle={ICCPS},
  year={2012},
}

@inproceedings{xie2010false,
  title={False data injection attacks in electricity markets},
  author={Xie, Le and Mo, Yilin and Sinopoli, Bruno},
  booktitle={SmartGridComm},
  year={2010},
}

@article{xie2011integrity,
  title={Integrity data attacks in power market operations},
  author={Xie, Le and Mo, Yilin and Sinopoli, Bruno},
  journal={IEEE Trans. Smart Grid},
  volume={2},
  number={4},
  pages={659--666},
  year={2011}
}

@inproceedings{negrete2009towards,
  title={Towards quantifying the impacts of cyber attacks in the competitive electricity market environment},
  author={Negrete-Pincetic, Matias and Yoshida, Felipe and Gross, George},
  booktitle={PowerTech},
  year={2009},
}

@inproceedings{jia2012impacts,
  title={Impacts of Malicious Data on Real-Time Price of Electricity Market Operations},
  author={Jia, Liyan and Thomas, Robert J and Tong, Lang},
  booktitle={Intl. Conf. System Sciences},
  year={2012},
}

@article{kosut2011malicious,
  title={Malicious data attacks on the smart grid},
  author={Kosut, Oliver and Jia, Liyan and Thomas, Robert J and Tong, Lang},
  journal={IEEE Trans. Smart Grid},
  volume={2},
  number={4},
  pages={645--658},
  year={2011}
}

@article{yuan2011modeling,
  title={Modeling load redistribution attacks in power systems},
  author={Yuan, Yanling and Li, Zuyi and Ren, Kui},
  journal={IEEE Trans. Smart Grid},
  volume={2},
  number={2},
  pages={382--390},
  year={2011},
}

@misc{Aurora,
  note={\url{http://bit.ly/4pVZki}}
}

@misc{interupt-index,
  note={\url{http://sourceforge.net/apps/mediawiki/gridlab-d/index.php?title=Reliability_Module_Guide}}
}

@ARTICLE{1300984, 
journal={IEEE Std 1366-2003 (Revision of IEEE Std 1366-1998)},
title={IEEE Guide for Electric Power Distribution Reliability Indices}, 
year={2004},
}

@book{kundur1994power,
  title={Power system stability and control},
  author={Kundur, Prabha},
  year={1994},
  publisher={Tata McGraw-Hill Education}
}

@ARTICLE{761873, 
author={Alvarado, F.}, 
journal={IEEE Trans. Power Syst.},
title={The stability of power system markets}, 
year={1999}, 
volume={14}, 
number={2}, 
pages={505-511}, 
}

@book{grainger1994power,
  title={Power system analysis},
  author={Grainger, J.J. and Stevenson, W.D.},
 year={1994},
  publisher={McGraw-Hill}
}

@techreport{thistechreport,
author={Rui Tan and Varun {Badrinath Krishna} and David K. Y. Yau and Zbigniew Kalbarczyk},
title={Impact of Integrity Attacks on Real-Time Pricing in Smart Grids},
institution={Advanced Digital Sciences Center},
year={2013},
note={\url{http://publish.illinois.edu/cps-security/files/2013/08/stability.pdf}}
}

\appendix

Denote $u_1\!=\! h \! + \! 1$, $u_2 \! = \! 2\eta \! + \! 2\eta(1 \! - \! \rho)h \! - \! h \! - \! 1$, $u_3 \! = \! 2\eta \rho h$, and $P(z) \! = \! u_1z^{\tau \! + \! 1} \! + \! u_2z^\tau \! + \! u_3$, where $u_1 > 0$ and $u_3 > 0$. Express any system pole (i.e., root of $P(z)=0$) in polar coordinates: $z = A(\cos \theta + i \sin \theta)$, where $A > 0$.
$P(z)=0$ can be rewritten as two equations: $A^\tau \left( u_1 A \cos(\tau+1)\theta + u_2 \cos\tau \theta \right) = -u_3$ and $A^\tau(u_1 A \sin(\tau+1)\theta + u_2 \sin\tau \theta) = 0$.
Adding the squares of the two equations yields $g(A) \! = \! 0$, where $g(A) \! = \! u_1^2 A^{2\tau\!+\!2} \! + \! 2u_1u_2\cos\theta A^{2\tau\!+\!1} \! + \! u_2^2 A^{2\tau} \! - \! u_3^2$. Thus, any pole satisfies $g(A)\!=\!0$.
We can verify that $g(1) > 0$ when $\rho \in (0,0.5]$. Moreover, $\dot{g}(A) = A^{2\tau} m(A)$, where $m(A) = u_1^2 (2\tau+2)A + 2u_1u_2(2\tau+1)\cos\theta + \frac{2\tau u_2^2}{A}$ is a convex function with its minimum at $A^* = \left| \frac{u_2}{u_1} \right| \sqrt{\frac{\tau}{\tau+1}}$. We can verify that $\left| \frac{u_2}{u_1} \right| < 1$ if $\rho \in (0, 0.5]$. Thus, $A^* < 1$ and the minimum of $m(A)$ for $A \ge 1$ is $m(1)$, which satisfies
$m(1) \! \ge \! u_1^2 ( 2\tau \! + \! 2) \! - \! 2u_1| u_2 |(2\tau \! + \! 1) \! + \! 2\tau \! u_2^2 \! = \! 2 ( u_1 \! - \! | u_2 |)(\tau(u_1 \! - \! | u_2 |) \! + \! u_1 )$.
If $u_2 < 0$, $u_1-|u_2|=2\eta + 2\eta(1-\rho)h > 0$; if $u_2 \ge 0$, $u_1 - |u_2| = 2(1-\eta) + 2h(1 - \eta(1-\rho)) > 0$. Hence, $m(A) > 0$ and $\dot{g}(A) > 0$ for $A \ge 1$. Recalling $g(1) > 0$, we have $g(A) > 0$ for $A \! \ge \! 1$. Hence, $A \! < \! 1$ for all poles and the system is stable.

\end{document}